\newtheorem{fact}{Fact}
\newtheorem{theorem}{Theorem}
\newtheorem{lemma}[theorem]{Lemma}
\newtheorem{corollary}[theorem]{Corollary}
\newtheorem{definition}{Definition}
\newtheorem{example}{Example}
\newtheorem{remark}{Remark}
\newtheorem{case}{Case}
\begin{document}
%
\title{Linear Complexity of Binary Interleaved Sequences of Period 4$n$\thanks{The work was supported by the National Natural Science Foundation of China (NSFC) under Grant 12031011.}}
%
%
%

\author{Qiuyue~Liu,
        Shiyuan~Qiang,
        Minghui~Yang,
        and~Keqin~Feng
\thanks{Qiuyue Liu is with the Center for Applied Mathematics, Tianjin University, Tianjin 300072, China (e-mail: 17864309562@163.com).}
\thanks{Shiyuan Qiang is with the Department of Applied Mathematics, China Agricultural University, Beijing 100083, China (e-mail: qsycau\_18@163.com).}
\thanks{Minghui Yang is with State Key Laboratory of Information Security, Institute of Information Engineering, Chinese Academy of Sciences, Beijing 100093, China (e-mail:  yangminghui6688@163.com).}
\thanks{Keqin Feng is with the Department of Mathematical Sciences, Tsinghua University, Beijing 100084, China (e-mail: fengkq@tsinghua.edu.cn).}
}

%
%

\markboth{}%
{Shell \MakeLowercase{\textit{et al.}}: Bare Demo of IEEEtran.cls for IEEE Journals}
%



\maketitle

\begin{abstract}
Binary periodic sequences with good autocorrelation property have many applications in many aspects of communication. In past decades many series of such binary sequences have been constructed. In the application of cryptography, such binary sequences are required to have larger linear complexity. Tang and Ding \cite{X. Tang} presented a method to construct a series of binary sequences with period 4$n$ having optimal autocorrelation. Such sequences are interleaved by two arbitrary binary sequences with period $n\equiv 3\pmod 4$ and ideal autocorrelation. In this paper we present a general formula on the linear complexity of such interleaved sequences. Particularly, we show that the linear complexity of such sequences with period 4$n$ is not bigger than $2n+2$. Interleaving by several types of known binary sequences with ideal autocorrelation ($m$-sequences, Legendre, twin-prime and Hall's sequences), we present many series of such sequences having the maximum value $2n+2$ of linear complexity which gives an answer of a problem raised by N. Li and X. Tang \cite{N. Li}. Finally, in the conclusion section we show that it can be seen easily that the 2-adic complexity of all such interleaved sequences reaches the maximum value $\log_{2}(2^{4n}-1)$.
\end{abstract}

\begin{IEEEkeywords}
Binary sequence, interleaved sequence, autocorrelation, linear complexity, 2-adic complexity
\end{IEEEkeywords}

%
\IEEEpeerreviewmaketitle

\section{Introduction}

Let
 $a=(a_i)_{i=0}^{n-1}$ be a binary sequence with period $n$, $a_i\in \{0, 1\}=\mathbb{F}_{2}$. The autocorrelation of $a$ is defined by
$$A_a(\tau)=\sum_{i=0}^{n-1}(-1)^{a_i+a_{i+\tau}} \ (0\leq\tau\leq n-1).$$

Binary sequence $a$ with small value $\max\{|A_a(\tau)|: 1\leq\tau\leq n-1\}$ has been used in many aspects of communication ([1, 4]). For $n\equiv3\pmod 4$, $n\geq3$, a binary sequence $a$ with period $n$ is called to have ideal autocorrelation if $A_a(\tau)=-1$ for all $\tau$, $1\leq\tau\leq n-1$. Many series of binary sequences with ideal autocorrelation have been constructed ($m$-sequences, Legendre sequences, Hall's sequences and many others as in \cite{J.-S. No, T. Storer}).

In the application of cryptography, a binary sequence is also required to have larger linear complexity.

\begin{definition}\label{def1}
For a binary sequence $a=(a_i)_{i=0}^{N-1} \ (a_i\in \{0, 1\})$ with period $N$, the linear complexity of $a$ is defined by $LC(a)=N-d$ where
$$d=\deg f_{a}(x), ~f_{a}(x)=\gcd (S_{a}(x), x^{N}-1)~~\textrm{(the greatest common divisor in $\mathbb{F}_{2}[x])$}$$
and $S_{a}(x)=\sum_{i=0}^{n-1}a_{i}x^{i}\in \mathbb{F}_{2}[x]$.
\end{definition}

Tang and Ding \cite{X. Tang} presented a method to construct a series of binary sequences $w=w(a, b)=(w_i)_{i=0}^{4n-1}$ with period 4$n$ and optimal autocorrelation property, which means that $A_w(\tau)\in \{0, -4\}$ for all $1\leq\tau\leq 4n-1$. Such sequences are interleaved by two arbitrary binary sequences $a$ and $b$ with ideal autocorrelation. In this paper we determine the linear complexity of such interleaved sequences.

In Section \ref{sec2} we introduce the binary interleaved sequences constructed by Tang and Ding in \cite{X. Tang}. In Section \ref{sec3} we present a general formula on the linear complexity $LC(w)$ of such interleaved sequence $w$ with period 4$n$~(Theorem \ref{th2}) and show that $LC(w)\leq 2n+2$. Li and Tang \cite{N. Li} raised a problem: to find interleaved sequence $w$ such that $LC(w)=2n+2$. As a direct consequence of Theorem \ref{th2} we present a necessary and sufficient condition on $LC(w)=2n+2$, and then we show many series of such sequences with the maximum linear complexity by interleaving several known types of binary sequences with ideal autocorrelation. In the conclusion section we remark that it is easy to see that the 2-adic complexity of all such interleaved sequences with period 4$n$ reaches the maximum value $\log_{2}(2^{4n}-1)$.

\section{Binary Interleaved Sequences}\label{sec2}

In this section we introduce the binary interleaved sequences presented in \cite{X. Tang}. For general theory on interleaved sequences we refer to G. Gong \cite{G. Gong}.

Let $n=4m-1~(m\geq1)$, $A=(A_{i})$, $B=(B_{i})$, $C=(C_{i})$ and $D=(D_{i})$ be binary sequences with period $n$, $A_{i}, B_{i}, C_{i}, D_{i}\in \{0, 1\}$. Considering the following  $n\times4$ $\{0, 1\}$-matrix
$$\left(
  \begin{array}{cccc}
    A_{0} & B_{0} & C_{0} & D_{0} \\
    A_{1} & B_{1} & C_{1} & D_{1} \\
    \vdots & \vdots & \vdots & \vdots \\
    A_{n-1} & B_{n-1} & C_{n-1} & D_{n-1} \\
  \end{array}
\right),
$$
the binary sequence interleaved by $A, B, C, D$ is the sequence with period 4$n$ defined by
\begin{align}
w& =I(A, B, C, D)=(w_{0}, w_{1},\cdots, w_{4n-1})\notag\\
                                  &= (A_{0}, B_{0}, C_{0}, D_{0}, A_{1}, B_{1}, C_{1}, D_{1}, \cdots, A_{n-1}, B_{n-1}, C_{n-1}, D_{n-1})\notag.
                                 \end{align}

Since $n=4m-1$ is odd, we have the following isomorphism of rings by the Chinese Remainder Theorem:
$$\varphi: \mathbb{Z}_{4n}\cong \mathbb{Z}_{4}\oplus \mathbb{Z}_{n},  ~~~\varphi(i (\bmod {4n}))=(i (\bmod 4), i (\bmod n)).$$
The pre-image of $(\alpha, \beta)\in \mathbb{Z}_{4}\oplus \mathbb{Z}_{n}$ is $\varphi^{-1}(\alpha, \beta)=-\alpha n+4\beta^{*}$ where $\beta^{*}\in \mathbb{Z}$ satisfying $4\beta^{*}\equiv\beta\pmod n$, $0\leq \beta\leq n-1$. From the definition of $w$ we know that for $i=-\alpha n+4\beta^{*}$ we have
\begin{align}\label{e1}
 w_{i}
 &= \left\{ \begin{array}{ll}
A_{\beta^{*}}, & \textrm{if $\alpha=0$};\\
B_{\frac{1}{4}(-n+4\beta^{*}-1)=B_{\beta^{*}}-\frac{n+1}{4}=B_{\beta^{*}-m}}, & \textrm{if $\alpha=1$};\\
C_{\frac{1}{4}(-2n+4\beta^{*}-2)=C_{\beta^{*}}-\frac{2(n+1)}{4}=C_{\beta^{*}-2m}}, & \textrm{if $\alpha=2$};\\
D_{\beta^{*}-3m}, & \textrm{if $\alpha=3$}.
\end{array} \right.
\end{align}

The binary interleaved sequence considered in \cite{X. Tang} is
$$w=w(a, b)=I(a, L^{m}(b), L^{2m}(a), L^{3m}(\overline{b}))$$
where $a$ and $b$ are arbitrary binary sequences with period $n=4m-1$ and ideal autocorrelation, $L^{s}(a)=(\ell_{i})_{i=0}^{n-1}$ is the shift sequence of $a$ defined by $\ell_{i}=a_{i+s}$, and $\overline{b}=(\overline{b_{i}})_{i=0}^{n-1}$ is the complementary sequence of $b$ defined by $\overline{b}_{i}=1-b_{i}$. It is proved in \cite{X. Tang}, all such interleaved sequences $w=w(a, b)$ have optimal autocorrelation: $A_{w}(\tau)=0$ or $-4$ for all $1\leq \tau\leq 4n-1$. In order to determine the linear complexity of $w=w(a, b)$, we need to compute the value $S_{w}(x)=\sum_{i=0}^{4n-1}w_{i}x^{i}\in \mathbb{F}_{2}[x]\pmod {x^{4n}-1}$.

\begin{lemma}\label{lem1}
Let $a=(a_i)_{i=0}^{n-1}$, $b=(b_i)_{i=0}^{n-1}$ be binary sequences with period $n=4m-1(m\geq1)$, $a_i, b_i\in \{0, 1\}$, $w=(w_{i})_{i=0}^{4n-1}$ be the binary interleaved sequence $w=w(a, b)=I(a, L^{m}(b), L^{2m}(a), L^{3m}(\overline{b}))$ with period 4$n$, $S_{w}(x)=\sum_{i=0}^{4n-1}w_{i}x^{i}\in \mathbb{F}_{2}[x]$. Then in $\mathbb{F}_{2}[x]$, we have
$$S_{w}(x)\equiv(1+x^{2n})\sum_{i=0}^{n-1}a_{i}x^{4i}+(x^{n}+x^{3n})\sum_{i=0}^{n-1}b_{i}x^{4i}+x^{3}\cdot \frac{x^{4n}-1}{x^{4}-1}\pmod {x^{4n}-1}.$$
\end{lemma}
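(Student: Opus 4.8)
The plan is to compute $S_w(x)$ directly from the interleaved structure by splitting the sum $\sum_{i=0}^{4n-1}w_ix^i$ into four partial sums according to the residue $i\bmod 4$. From the interleaving convention $w_{4j}=A_j$, $w_{4j+1}=B_j$, $w_{4j+2}=C_j$, $w_{4j+3}=D_j$ applied to $w=w(a,b)=I(a,L^m(b),L^{2m}(a),L^{3m}(\overline b))$, one reads off $w_{4j}=a_j$, $w_{4j+1}=b_{j+m}$, $w_{4j+2}=a_{j+2m}$, and $w_{4j+3}=1-b_{j+3m}$. Thus I would start from
$$S_w(x)=\sum_{j=0}^{n-1}a_jx^{4j}+x\sum_{j=0}^{n-1}b_{j+m}x^{4j}+x^2\sum_{j=0}^{n-1}a_{j+2m}x^{4j}+x^3\sum_{j=0}^{n-1}(1-b_{j+3m})x^{4j},$$
where all sequence indices are read modulo $n$ and all arithmetic is carried out in $\mathbb{F}_2[x]$.

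Next I would handle the shifts. Setting $y=x^4$, we have $y^n=x^{4n}\equiv 1\pmod{x^{4n}-1}$, so a cyclic shift of the index by $s$ becomes multiplication by $y^{-s}$, namely $\sum_{j=0}^{n-1}c_{j+s}y^j\equiv y^{-s}\sum_{j=0}^{n-1}c_jy^j\pmod{y^n-1}$. Applying this to the three shifted partial sums and reducing the resulting exponents modulo $4n$ by means of the key relation $4m=n+1$ yields $x\cdot y^{-m}\equiv x^{3n}$, $x^2\cdot y^{-2m}\equiv x^{2n}$, and $x^3\cdot y^{-3m}\equiv x^n$. For the complement term in the fourth sum I would split $1-b_{j+3m}$ into the constant $1$ and $-b_{j+3m}$: the all-ones part contributes $x^3\sum_{j=0}^{n-1}x^{4j}=x^3\cdot\frac{x^{4n}-1}{x^4-1}$, which is exactly the last summand of the claim, while the remaining $b$-part merges with the second partial sum.

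Collecting terms then gives the $a$-contribution $(1+x^{2n})\sum_{i=0}^{n-1}a_ix^{4i}$ and the $b$-contribution $(x^n+x^{3n})\sum_{i=0}^{n-1}b_ix^{4i}$, using that $-1=1$ in $\mathbb{F}_2$, which is precisely the stated identity. The proof is essentially a bookkeeping computation, and the only genuinely delicate point is the exponent arithmetic: translating the index shifts modulo $n$ into exponent shifts modulo $4n$ and verifying that each reduces to the correct power ($3n$, $2n$, $n$) via $4m=n+1$. I expect this translation, together with tracking the complement so that its linear part produces the geometric-series term while its $b$-part contributes the second half of the $b$-coefficient $(x^n+x^{3n})$, to be the main place where care is required.
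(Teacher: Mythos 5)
Your proposal is correct and is essentially the paper's own computation: both expand $S_w(x)$ according to the residue of the exponent modulo $4$, use the relation $n=4m-1$ to convert the index shifts by $m,2m,3m$ into the exponent factors $x^{3n}$, $x^{2n}$, $x^{n}$ modulo $x^{4n}-1$, and let the complemented fourth column produce the geometric-series term $x^{3}\cdot\frac{x^{4n}-1}{x^{4}-1}$. The only difference is organizational: the paper arrives at the split-by-residue identity $S_w(x)\equiv\sum_{i}x^{4i}(A_i+B_ix+C_ix^2+D_ix^3)$ by way of its CRT parametrization $i=-\alpha n+4\beta^{*}$ (formula (1)) and then compresses the final shift-and-collect step into a single line, whereas you read that identity directly off the interleaving definition and instead spell out the shift arithmetic that the paper leaves implicit.
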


\begin{proof}
For general interleaved sequence $w=I(A, B, C, D)=(w_{i})_{i=0}^{4n-1}$,
\begin{align*}
S_{w}(x) &\equiv \sum_{\alpha=0}^{3}\sum_{\beta=0}^{n-1}w_{-\alpha n+4\beta^{*}}x^{-\alpha n+4\beta^{*}}\pmod{x^{4n}-1}\notag\\
                                  &\equiv \sum_{\alpha=0}^{3}\sum_{\beta=0}^{n-1}w_{-\alpha n+4\beta}x^{-\alpha n+4\beta}\pmod{x^{4n}-1}\notag.
\end{align*}
(From $4\beta^{*}\equiv \beta\pmod n$ and $2\nmid n$ we know that $\beta\rightarrow\beta^{*}$ is a permutation on $\mathbb{Z}_{n}$.)\\
Then by formula (\ref{e1}) in Section II,
\begin{align*}
S_{w}(x) &\equiv \sum_{\beta=0}^{n-1}(A_{\beta}x^{4\beta}+B_{\beta-m}x^{-n+4\beta}
         +C_{\beta-2m}x^{-2n+4\beta}+D_{\beta-3m}x^{-3n+4\beta})\pmod{x^{4n}-1}\notag\\
         &\equiv \sum_{\beta=0}^{n-1}(A_{\beta}x^{4\beta}+B_{\beta-m}x^{1+4(\beta-m)}
         +C_{\beta-2m}x^{2+4(\beta-2m)}+D_{\beta-3m}x^{3+4(\beta-3m)})\pmod{x^{4n}-1}\notag\\
         &\ \ \ (\text{since} ~n=4m-1)\notag\\
         &\equiv \sum_{i=0}^{n-1}(A_{i}x^{4i}+B_{i}x^{1+4i}
         +C_{i}x^{2+4i}+D_{i}x^{3+4i})\pmod{x^{4n}-1}\notag\\
         &\equiv \sum_{i=0}^{n-1}x^{4i}(A_{i}+B_{i}x
         +C_{i}x^{2}+D_{i}x^{3})\pmod{x^{4n}-1}\notag.
\end{align*}
Now let $(A, B, C, D)=(a, L^{m}(b), L^{2m}(a), L^{3m}(\overline{b}))$. Then $w=w(a, b)$ and
\begin{align*}
S_{w}(x)& \equiv \sum_{i=0}^{n-1}x^{4i}(a_{i}+b_{i+m}x+a_{i+2m}x^{2}+(1+b_{i+3m})x^{3})\pmod{x^{4n}-1}\notag\\
&\equiv (1+x^{2n})\sum_{i=0}^{n-1}a_{i}x^{4i}+(x^{n}+x^{3n})
\sum_{i=0}^{n-1}b_{i}x^{4i}+x^{3}\cdot\frac{x^{4n}-1}{x^{4}-1}\pmod{x^{4n}-1}.
\end{align*}
\end{proof}

\section{Linear Complexity: General Result}\label{sec3}

Recall that for a binary sequence $w=(w_{i})_{i=0}^{N-1}$ with period $N$, the linear complexity of $w$ is defined by
$$LC(w)=N-d, d=\deg f_{w}(x), f_{w}(x)=\gcd (S_{w}(x), x^{N}-1)\in \mathbb{F}_{2}[x], S_{w}(x)=\sum_{i=0}^{N-1}w_{i}x^{i}\in \mathbb{F}_{2}[x].$$
and $LC(w)$ reaches the maximum value $N$ if and only if $f_{w}(x)=1$.

In this section we show the following general result on the linear complexity of the binary interleaved sequence $w=w(a, b)$ where $a$ and $b$ are binary sequences with period $n=4m-1(m\geq1)$ and ideal autocorrelation.

\begin{theorem}\label{th2}
Let $n=4m-1(m\geq1)$, $a=(a_i)_{i=0}^{n-1}$ and $b=(b_i)_{i=0}^{n-1}$ be binary sequences with period $n$ and ideal autocorrelation. Let $\theta$ be an $n$-th primitive root of 1 in an extension field of $\mathbb{F}_{2}$ (such $\theta$ exists since $2\nmid n$). Let
$$S_{a}(x)=\sum_{i=0}^{n-1}a_{i}x^{i}, ~~S_{b}(x)=\sum_{i=0}^{n-1}b_{i}x^{i}, ~~S_{a+b}(x)=S_{a}(x)+S_{b}(x)~~(in ~\mathbb{F}_{2}[x]),$$
$$Z_{a}=\{1\leq \lambda\leq n-1: S_{a}(\theta^{\lambda})=0\}, ~~Z_{b}=\{1\leq \lambda\leq n-1: S_{b}(\theta^{\lambda})=0\},$$
$$Z_{a+b}=\{1\leq \lambda\leq n-1: S_{a+b}(\theta^{\lambda})=0\}.$$
Then the linear complexity of the interleaved sequence $w=w(a, b)$ is
\begin{align*}
LC(w)& =2n+2-|Z_{a}\cap Z_{b}|-|Z_{a+b}|\notag\\
& =2n+2-\deg\gcd \left(S_{a}(x), S_{b}(x), \frac{x^{n}-1}{x-1}\right)-\deg\gcd \left(S_{a+b}(x), \frac{x^{n}-1}{x-1}\right).
\end{align*}
\end{theorem}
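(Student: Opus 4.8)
The plan is to begin from the closed form for $S_w(x)$ in Lemma~\ref{lem1} and reduce the computation of $LC(w)=4n-\deg\gcd\bigl(S_w(x),x^{4n}-1\bigr)$ to a multiplicity count at the $n$-th roots of unity. Working in characteristic $2$ I would use $x^{4n}-1=(x^n-1)^4$ together with the Frobenius identity $f(x)^4=f(x^4)$, which rewrites the three summands of Lemma~\ref{lem1} as $1+x^{2n}=(x^n-1)^2$, $x^n+x^{3n}=x^n(x^n-1)^2$, and $x^3\frac{x^{4n}-1}{x^4-1}=x^3R(x)^4$, where $R(x)=\frac{x^n-1}{x-1}$. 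Since $\sum_i a_ix^{4i}=S_a(x^4)$ and $\sum_i b_ix^{4i}=S_b(x^4)$, this gives
$$S_w(x)\equiv (x^n-1)^2\,g(x)+x^3R(x)^4\pmod{(x^n-1)^4},\qquad g(x):=S_a(x^4)+x^nS_b(x^4).$$
As $x^n-1=(x-1)R(x)$ is squarefree with $R(1)=n=1$ in $\mathbb{F}_2$, the polynomial $x^{4n}-1=(x-1)^4R(x)^4$ splits over a splitting field into distinct linear factors each of multiplicity $4$, so $\deg\gcd(S_w,x^{4n}-1)=\sum_\zeta\min\bigl(v_\zeta(S_w),4\bigr)$, the sum running over all $n$-th roots of unity $\zeta$ with $v_\zeta$ the multiplicity.

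Next I would separate $\zeta=1$ from the roots of $R(x)$. At $\zeta=1$ the first summand vanishes while $x^3R(x)^4$ takes the value $R(1)^4=1$, so $\zeta=1$ contributes $0$. For a root $\zeta$ of $R(x)$ one has $v_\zeta(x^n-1)=v_\zeta(R)=1$ and $v_\zeta(x)=0$, so the two summands of the representative have multiplicities $2+v_\zeta(g)$ and $4$. Since $S_w$ differs from this representative by a multiple of $(x^n-1)^4$, one reads off $\min(v_\zeta(S_w),4)$ directly: it is $2$ when $v_\zeta(g)=0$, is $3$ when $v_\zeta(g)=1$, and is $4$ whenever $v_\zeta(g)\ge 2$ (in the last case both summands already have multiplicity $\ge 4$, so the cap absorbs any cancellation and no finer analysis is required).

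The crux is to read off $v_\zeta(g)$ from the value and derivative of $g$ at $\zeta$. Using $f(x^4)=f(x)^4$ and $\zeta^n=1$ gives $g(\zeta)=S_a(\zeta)^4+S_b(\zeta)^4=S_{a+b}(\zeta)^4$, and, since $4=0$ and $n=1$ in $\mathbb{F}_2$, $g'(x)=x^{n-1}S_b(x^4)$, so $g'(\zeta)=\zeta^{-1}S_b(\zeta)^4$. By the derivative criterion for multiple roots (valid in any characteristic), $v_\zeta(g)\ge 1\iff S_{a+b}(\zeta)=0$, and, given that, $v_\zeta(g)\ge 2\iff S_b(\zeta)=0\iff S_a(\zeta)=S_b(\zeta)=0$. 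Hence the contribution of $\zeta=\theta^\lambda$ is $2$ unless $\lambda\in Z_{a+b}$, in which case it is $3$, rising to $4$ exactly when $\lambda\in Z_a\cap Z_b$; note $Z_a\cap Z_b\subseteq Z_{a+b}$.

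Finally I would sum over the $n-1$ roots of $R(x)$, of which $|Z_{a+b}|$ have contribution at least $3$ and $|Z_a\cap Z_b|$ have contribution $4$:
$$\sum_\zeta\min(v_\zeta(S_w),4)=2\bigl((n-1)-|Z_{a+b}|\bigr)+3\bigl(|Z_{a+b}|-|Z_a\cap Z_b|\bigr)+4|Z_a\cap Z_b|=2n-2+|Z_{a+b}|+|Z_a\cap Z_b|.$$
Subtracting from $4n$ yields $LC(w)=2n+2-|Z_a\cap Z_b|-|Z_{a+b}|$, and the gcd-degree form follows because $R(x)$ is squarefree, so $|Z_a\cap Z_b|=\deg\gcd(S_a,S_b,R)$ and $|Z_{a+b}|=\deg\gcd(S_{a+b},R)$. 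I expect the main obstacle to be the local multiplicity bookkeeping at the roots of $R(x)$: computing the characteristic-$2$ derivative $g'$ correctly, and justifying that passing to the representative modulo $(x^n-1)^4$ loses nothing precisely because, once a multiplicity attains the cap $4$, the two summands cannot push it below $4$. Once the value/derivative dichotomy for $g$ is in hand, the identification of the three contribution classes with $Z_{a+b}$ and $Z_a\cap Z_b$ is immediate.
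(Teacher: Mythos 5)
Your proof is correct and takes essentially the same route as the paper: both reduce Lemma~\ref{lem1} in characteristic $2$ to the same polynomial $g(x)=S_a(x)^4+x^nS_b(x)^4$, exploit $g(\theta^{\lambda})=S_{a+b}(\theta^{\lambda})^4$ together with $g'(x)=x^{n-1}S_b(x)^4$, and run the identical three-case multiplicity analysis at the roots of $\frac{x^n-1}{x-1}$. The only difference is presentational: you tally capped local valuations $\min\bigl(v_{\zeta}(S_w),4\bigr)$ root by root, whereas the paper packages the same count globally as $f_w(x)=\left(\frac{x^n-1}{x-1}\right)^2\gcd\left(g(x),\left(\frac{x^n-1}{x-1}\right)^2\right)$.
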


\begin{proof}
From $2\nmid n$ we know that
$$x^{4n}-1=(x^{n}-1)^{4}=\prod_{\lambda=0}^{n-1}(x-\theta^{\lambda})^{4}
=(x-1)^{4}\prod_{\lambda=1}^{n-1}(x-\theta^{\lambda})^{4}$$
By Lemma \ref{lem1} we have
$$S_{w}(x)=\sum_{i=0}^{4n-1}w_{i}x^{i}\equiv\left(\frac{x^{n}-1}{x-1}\right)^{2}
\left[(x-1)^{2}S_{a}(x)^{4}+x^{n}(x-1)^{2}S_{b}(x)^{4}+x^{3}\left(\frac{x^{n}-1}{x-1}\right)^{2}\right]\pmod {(x^{n}-1)^{4}}.$$
Therefore
$$f_{w}(x)=\gcd (S_{w}(x), x^{4n}-1)=\left(\frac{x^{n}-1}{x-1}\right)^{2}\cdot \gcd (F(x), (x^{n}-1)^{2}(x-1)^{2})$$
where
$$F(x)=(x-1)^{2}S_{a}(x)^{4}+x^{n}(x-1)^{2}S_{b}(x)^{4}+x^{3}\left(\frac{x^{n}-1}{x-1}\right)^{2}\in\mathbb{F}_{2}[x]. $$
Since $F(x)\equiv x^{3}\left(\frac{x^{n}-1}{x-1}\right)^{2}\equiv n^{2}x^{3}\equiv1\pmod {x-1}$, we know that $\gcd (F(x), x-1)=1$ and $f_{w}(x)=\left(\frac{x^{n}-1}{x-1}\right)^{2}h(x)$ where $h(x)=\gcd \left(F(x), \left(\frac{x^{n}-1}{x-1}\right)^{2}\right)=\gcd \left(g(x), \left(\frac{x^{n}-1}{x-1}\right)^{2}\right)$ and $g(x)=S_{a}(x)^{4}+x^{n}S_{b}(x)^{4}\in \mathbb{F}_{2}[x]$. The derivative of $g(x)$ in $\mathbb{F}_{2}[x]$ is $g'(x)=nx^{n-1}S_{b}(x)^{4}=x^{n-1}S_{b}(x)^{4}$. On the other hand, $\left(\frac{x^{n}-1}{x-1}\right)^{2}=\prod_{\lambda=1}^{n-1}(x-\theta^{\lambda})^{2}$. Therefore $h(x)=\prod_{\lambda=1}^{n-1}(x-\theta^{\lambda})^{h_{\lambda}}$, $0\leq h_{\lambda}\leq2$. For $1\leq \lambda\leq n-1$, we have
$$g(\theta^{\lambda})=S_{a}(\theta^{\lambda})^{4}+\theta^{\lambda n}S_{b}(\theta^{\lambda})^{4}=S_{a+b}(\theta^{\lambda})^{4}.$$

(I). If $S_{a+b}(\theta^{\lambda})\neq0$, then $g(\theta^{\lambda})\neq0$ and $h(\theta^{\lambda})\neq0$, we get $h_{\lambda}=0$.

(II). If $S_{a+b}(\theta^{\lambda})=0$ and $S_{b}(\theta^{\lambda})\neq0$, then $S_{a}(\theta^{\lambda})\neq0$ and $g(\theta^{\lambda})=0$,  $g'(\theta^{\lambda})\neq0$. We get $h_{\lambda}=1$.

(III). If $S_{a+b}(\theta^{\lambda})=S_{b}(\theta^{\lambda})=0$, then $S_{a}(\theta^{\lambda})=0$ and $(x-\theta^{\lambda})^{2}\mid g(x)$. We get $h_{\lambda}=2$.\\
Then we get
$$h(x)=\prod_{\lambda=1 \atop S_{a}(\theta^{\lambda})=S_{b}(\theta^{\lambda})=0}^{n-1}(x-\theta^{\lambda})^{2}\cdot \prod_{\lambda=1 \atop S_{a}(\theta^{\lambda})=S_{b}(\theta^{\lambda})\neq0}^{n-1}(x-\theta^{\lambda})$$
And
\begin{align*}
LC(w)& =4n-\deg \gcd (S_{w}(x), x^{4n}-1)=4n-\deg f_{w}(x)\notag\\
& =4n-2(n-1)-\deg h(x)=2n+2-2|Z_{a}\cap Z_{b}|-|Z_{a+b}\backslash (Z_{a}\cap Z_{b})|\notag\\
& =2n+2-|Z_{a}\cap Z_{b}|-|Z_{a+b}|.
\end{align*}
Since $Z_{a}\cap Z_{b}=\{1\leq \lambda\leq n-1: S_{a}(\theta^{\lambda})=S_{b}(\theta^{\lambda})=0\}$, $Z_{a+b}=\{1\leq \lambda\leq n-1: S_{a+b}(\theta^{\lambda})=0\}$, we know that $|Z_{a}\cap Z_{b}|=\deg\gcd \left(S_{a}(x), S_{b}(x), \frac{x^{n}-1}{x-1}\right)$ and $|Z_{a+b}|=\deg\gcd \left(S_{a+b}(x), \frac{x^{n}-1}{x-1}\right)$. This completes the proof of Theorem \ref{th2}.
\end{proof}

Theorem \ref{th2} has several consequences. Firstly we know that $LC(w)\leq 2n+2$. Moreover, $LC(w)=2n+2$ if and only if $Z_{a}\cap Z_{b}$ and $Z_{a+b}$ are empty. But $Z_{a}\cap Z_{b}\subseteq Z_{a+b}$, we get the following consequence.

\begin{corollary}\label{co3}
Let $a$ and $b$ be binary sequences with period $n\equiv3\pmod 4$ and ideal autocorrelation. Then the linear complexity of $w=w(a, b)$ reaches the maximum value $2n+2$ if and only if $Z_{a+b}=\emptyset$, namely, for each $\lambda (1\leq \lambda\leq n-1)$, $S_{a}(\theta^{\lambda})+S_{b}(\theta^{\lambda})\neq0$.
\end{corollary}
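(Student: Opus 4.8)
The plan is to read everything off the formula already established in Theorem \ref{th2}, namely
$$LC(w)=2n+2-|Z_{a}\cap Z_{b}|-|Z_{a+b}|.$$
First I would note that $|Z_{a}\cap Z_{b}|$ and $|Z_{a+b}|$ are both non-negative integers, so this formula immediately gives $LC(w)\leq 2n+2$, and equality holds if and only if both cardinalities vanish, that is, if and only if both index sets $Z_{a}\cap Z_{b}$ and $Z_{a+b}$ are empty. This reduces the corollary to showing that the two emptiness conditions can be replaced by the single condition $Z_{a+b}=\emptyset$.

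The one piece of genuine content is the set-theoretic inclusion $Z_{a}\cap Z_{b}\subseteq Z_{a+b}$. I would verify this directly from the definition $S_{a+b}(x)=S_{a}(x)+S_{b}(x)$ in $\mathbb{F}_{2}[x]$: for any $\lambda$ with $1\leq\lambda\leq n-1$, if $S_{a}(\theta^{\lambda})=S_{b}(\theta^{\lambda})=0$ then $S_{a+b}(\theta^{\lambda})=S_{a}(\theta^{\lambda})+S_{b}(\theta^{\lambda})=0$, so $\lambda\in Z_{a+b}$. Hence whenever $Z_{a+b}=\emptyset$ we automatically have $Z_{a}\cap Z_{b}=\emptyset$ as well.

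Combining these two observations finishes the argument: the requirement that both $Z_{a}\cap Z_{b}$ and $Z_{a+b}$ be empty collapses, by the inclusion, to the single requirement $Z_{a+b}=\emptyset$. Unwinding the definition of $Z_{a+b}$ then restates this as $S_{a}(\theta^{\lambda})+S_{b}(\theta^{\lambda})\neq 0$ for every $\lambda$ with $1\leq\lambda\leq n-1$, which is exactly the claimed criterion. There is no real obstacle here, since the result is a direct bookkeeping consequence of Theorem \ref{th2}; the only points to state carefully are that the two cardinalities are non-negative (so that $2n+2$ is genuinely the maximum attainable value) and that it is the $\mathbb{F}_{2}$-linearity $S_{a+b}=S_{a}+S_{b}$ that makes $Z_{a+b}$ the controlling set and yields the inclusion $Z_{a}\cap Z_{b}\subseteq Z_{a+b}$.
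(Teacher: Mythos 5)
Your proof is correct and follows exactly the paper's own route: the paper likewise reads the bound $LC(w)\leq 2n+2$ off the formula in Theorem \ref{th2}, notes equality holds iff $Z_{a}\cap Z_{b}$ and $Z_{a+b}$ are both empty, and collapses the two conditions via the inclusion $Z_{a}\cap Z_{b}\subseteq Z_{a+b}$. Your only addition is spelling out why that inclusion holds (the $\mathbb{F}_{2}$-linearity $S_{a+b}=S_{a}+S_{b}$), which the paper leaves implicit.
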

From now on we denote $\sum(n)$ the set of all binary sequences with period $n$, $\sum'(n)$ the set of all binary sequences with period $n\equiv3\pmod 4$ and ideal autocorrelation. We introduce a transformation group $G$ on $\sum'(n)$, so that from one sequence $a$ in $\sum'(n)$ we can get more sequences $\sigma(a)(\sigma\in G)$ in $\sum'(n)$.

Let $a=(a_i)_{i=0}^{n-1}\in \sum(n)$.

\begin{definition} \label{def2}
(1). The complementary sequence of $a$ is $\overline{a}=(\overline{a_i})_{i=0}^{n-1}\in \sum(n)$ defined by $\overline{a_{i}}=1-a_{i}(0\leq i\leq n-1)$.

(2). For $0\leq r\leq n-1$, the shift sequence $L^{r}(a)=(c_i)_{i=0}^{n-1}$ of $a$ is defined by $c_{i}=a_{i+r} (0\leq i\leq n-1)$. We have $L^{r}(a)\in \sum(n)$, $L^{0}(a)= a$ and $L^{r}L^{r'}=L^{r+r'}$.

(3). For $s\in \mathbb{Z}_{n}^{*}$ (namely, $1\leq s\leq n-1$, $\gcd(s, n)=1$), the sample sequence $M_{s}(a)=(b_{i})_{i=0}^{n-1}$ of $a$ is defined by $b_{i}=a_{si}(0\leq i\leq n-1)$. We have $M_{s}(a)\in \sum(n)$,  $M_{1}(a)=a$ and $M_{s}M_{s'}=M_{ss'}$. Moreover,
$$M_{s}L^{r}(a_{i})=M_{s}(a_{i+r})=a_{sr+si}=L^{sr}(a_{si})=L^{sr}M_{s}(a_{i}).$$
\end{definition}

Therefore we have a transformation group $G=\{L^{r}M_{s}: r\in \mathbb{Z}_{n}, s\in \mathbb{Z}_{n}^{*}\}$ on $\sum(n)$ with relations $L^{0}=M_{1}=I_{d}$, $L^{r}L^{r'}=L^{r+r'}$, $M_{s}M_{s'}=M_{ss'}$, $M_{s}L^{r}=L^{sr}M_{s}$.

As another consequence, we show that $G$ is a transformation group on $\sum'(n)$.

\begin{corollary}\label{co4}
(1). If $a\in \sum'(n)$, then $\sigma(\overline{a})\in \sum'(n)$ and $\sigma(a)\in \sum'(n)$ for all $\sigma\in G$.

(2). For $a, b\in \sum'(n)$, interleaved sequences $w(a, b)$, $w(a, \overline{b})$ and $w(\sigma(a), \sigma(b)) (\sigma\in G)$ have the same linear complexity.
\end{corollary}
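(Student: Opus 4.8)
The plan is to verify Corollary \ref{co4} by tracking how the transformations in $G$ and complementation affect the three sets $Z_a$, $Z_b$, $Z_{a+b}$ and the ideal-autocorrelation property, then appeal to Theorem \ref{th2} and Corollary \ref{co3}. For part (1), I would first recall that a binary sequence $a$ with period $n\equiv 3\pmod 4$ has ideal autocorrelation if and only if a certain spectral condition holds, namely that $|S_a(\theta^\lambda)|^2$ (computed in the appropriate extension of $\mathbb{F}_2$, or equivalently the character-sum / difference-set condition) takes the constant value corresponding to $A_a(\tau)=-1$ for all $\lambda\neq 0$. The key observation is that both $L^r$ and $M_s$ act on the autocorrelation values by a relabeling of the shift parameter $\tau$: a cyclic shift leaves all autocorrelations unchanged, and the decimation $M_s$ (with $\gcd(s,n)=1$) permutes the nonzero shifts $\tau\mapsto s\tau$, hence also preserves the property $A_a(\tau)=-1$ for all $1\le\tau\le n-1$. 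Complementation $\overline{a}$ replaces $(-1)^{a_i}$ by $-(-1)^{a_i}$, so every cross term $(-1)^{\overline{a_i}+\overline{a_{i+\tau}}}=(-1)^{a_i+a_{i+\tau}}$ is unchanged, giving $A_{\overline a}(\tau)=A_a(\tau)$; thus $\overline{a}\in\sum'(n)$ as well. Combining these, $\sigma(a)$ and $\sigma(\overline a)$ lie in $\sum'(n)$ for every $\sigma\in G$.

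For part (2), the strategy is to show each of the three listed sequences has the same values of $|Z_a\cap Z_b|$ and $|Z_{a+b}|$ as $w(a,b)$, so that Theorem \ref{th2} forces equal linear complexity. The case $w(a,\overline b)$ is the cleanest: since $S_{\overline b}(x)=\frac{x^n-1}{x-1}+S_b(x)$ and $\frac{x^n-1}{x-1}=\sum_{i=0}^{n-1}x^i$ vanishes at every $\theta^\lambda$ with $1\le\lambda\le n-1$, we have $S_{\overline b}(\theta^\lambda)=S_b(\theta^\lambda)$ for all such $\lambda$. Hence $Z_{\overline b}=Z_b$ and $Z_{a+\overline b}=Z_{a+b}$ on the relevant index range, so the two determining quantities in Theorem \ref{th2} are literally identical and $LC(w(a,\overline b))=LC(w(a,b))$.

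For $w(\sigma(a),\sigma(b))$ with $\sigma=L^rM_s$, I would compute how $S_{\sigma(a)}(\theta^\lambda)$ relates to $S_a$ evaluated at another power of $\theta$. A cyclic shift $L^r$ multiplies the generating polynomial's evaluation at $\theta^\lambda$ by $\theta^{-r\lambda}$ (a nonzero scalar), so it cannot create or destroy a zero; thus $L^r$ fixes each of $Z_a$, $Z_b$, $Z_{a+b}$ setwise. A decimation $M_s$ satisfies $S_{M_s(a)}(\theta^\lambda)=\sum_i a_{si}\theta^{i\lambda}=\sum_j a_j\theta^{js^{-1}\lambda}=S_a(\theta^{s^{-1}\lambda})$, where $s^{-1}$ is the inverse of $s$ modulo $n$; therefore $M_s$ induces the bijection $\lambda\mapsto s\lambda$ on the index set $\{1,\dots,n-1\}$ and carries $Z_a$ to $\{s\lambda:\lambda\in Z_a\}$, with the analogous action on $Z_b$ and $Z_{a+b}$. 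Since this is a bijection of the index set, the cardinalities $|Z_{\sigma(a)}\cap Z_{\sigma(b)}|=|Z_a\cap Z_b|$ and $|Z_{\sigma(a)+\sigma(b)}|=|Z_{a+b}|$ are preserved, and Theorem \ref{th2} yields equality of linear complexities.

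The main obstacle is handling $M_s$ on the sum correctly: one must check that $S_{\sigma(a)+\sigma(b)}=S_{M_s(a)+M_s(b)}$ really corresponds to $S_{a+b}$ reindexed, i.e. that sampling commutes with the coefficientwise $\mathbb{F}_2$-addition defining $S_{a+b}$. This is true because $(M_s(a)+M_s(b))_i=a_{si}+b_{si}=(a+b)_{si}=M_s(a+b)_i$, so $M_s(a)+M_s(b)=M_s(a+b)$ and the same decimation identity applies; the remaining bookkeeping (that $\sigma$ maps $\sum'(n)$ into itself, already shown in part (1), so that $w(\sigma(a),\sigma(b))$ is even a legitimate sequence to which Theorem \ref{th2} applies) is routine. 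No delicate estimate is needed anywhere; the entire argument reduces to the two scalar-multiplication / index-permutation identities for $L^r$ and $M_s$ together with the vanishing of $\frac{x^n-1}{x-1}$ at nontrivial $n$-th roots of unity.
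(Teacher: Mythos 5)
Your proposal is correct and takes essentially the same approach as the paper: part (1) via the same reindexing computations showing that complementation preserves autocorrelation values while $L^r$ fixes them and $M_s$ permutes the shifts $\tau\mapsto s\tau$, and part (2) by checking that complementation, shifting, and decimation leave the two quantities in Theorem~\ref{th2} unchanged (including the identity $M_s(a)+M_s(b)=M_s(a+b)$ and the permutation $\lambda\mapsto s\lambda$ of $\{1,\dots,n-1\}$). The only cosmetic difference is that you phrase part (2) throughout in terms of the zero-sets $Z_a$, $Z_b$, $Z_{a+b}$ at the roots $\theta^{\lambda}$, whereas the paper mostly manipulates the equivalent $\gcd$ expressions; since Theorem~\ref{th2} asserts both formulations are equal, the two presentations are interchangeable.
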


\begin{proof}
(1). Let $a=(a_i)_{i=0}^{n-1}$. By $a\in \sum'(n)$ we know that
$$A_{a}(\tau)=\sum_{i=0}^{n-1}(-1)^{a_{i}+a_{i+\tau}}=-1, ~~\textrm{for all}~~ 1\leq\tau\leq n-1.$$
Then $A_{\overline{a}}(\tau)=\sum_{i=0}^{n-1}(-1)^{1-a_{i}+1-a_{i+\tau}}
=\sum_{i=0}^{n-1}(-1)^{a_{i}+a_{i+\tau}}=A_{a}(\tau)=-1$. Therefore $\overline{a}\in \sum'(n)$.

For $r\in \mathbb{Z}_{n}$,
$$A_{L^{r}(a)}(\tau)=\sum_{i=0}^{n-1}(-1)^{a_{i+r}+a_{i+r+\tau}}
=\sum_{j=0}^{n-1}(-1)^{a_{j}+a_{j+\tau}}=A_{a}(\tau)=-1.$$
Therefore $L^{r}(a)\in \sum'(n)$. For $s\in \mathbb{Z}_{n}^{*}$, $1\leq\tau\leq n-1$,
$$A_{M_{s}(a)}(\tau)=\sum_{i=0}^{n-1}(-1)^{a_{is}+a_{s(\tau+i)}}
=\sum_{j=0}^{n-1}(-1)^{a_{j}+a_{j+s\tau}}=A_{a}(s\tau)=-1.$$
Therefore $M_{s}(a)\in \sum'(n)$. Since $G$ is generated by $L^{r}(0\leq r\leq n-1)$ and $M_{s}(s\in \mathbb{Z}_{n}^{*})$, we know that $\sigma(a)\in \sum'(n)$ for any $\sigma\in G$.

(2). Let $a=(a_i)_{i=0}^{n-1}$ and $b=(b_i)_{i=0}^{n-1}$. Then

$$S_{\overline{b}}(x)=\sum_{i=0}^{n-1}(1-b_{i})x^{i}=\frac{x^{n}-1}{x-1}+S_{b}(x)\in \mathbb{F}_{2}[x].$$
We get
$$\gcd \left(S_{a}(x), S_{\overline{b}}(x), \frac{x^{n}-1}{x-1}\right)=\gcd \left(S_{a}(x), S_{b}(x)+\frac{x^{n}-1}{x-1}, \frac{x^{n}-1}{x-1}\right)=\gcd \left(S_{a}(x), S_{b}(x), \frac{x^{n}-1}{x-1}\right),$$
$$\gcd \left(S_{a}(x)+S_{\overline{b}}(x), \frac{x^{n}-1}{x-1}\right)=\gcd \left(S_{a}(x)+S_{b}(x)+\frac{x^{n}-1}{x-1}, \frac{x^{n}-1}{x-1}\right)=\gcd \left(S_{a}(x)+S_{b}(x), \frac{x^{n}-1}{x-1}\right).$$
By Theorem \ref{th2}, $w(a, b)$ and $w(a, \overline{b})$ have the same linear complexity.

Next, for $r\in \mathbb{Z}_{n}, s\in \mathbb{Z}_{n}^{*}$
$$S_{L^{r}(b)}(x)=\sum_{i=0}^{n-1}b_{i+r}x^{i}=x^{-r}\sum_{j=0}^{n-1}b_{j}x^{j}\equiv x^{-r}S_{b}(x)\pmod {x^{n}-1}$$
$$S_{L^{r}(a)}(x)\equiv x^{-r}S_{a}(x)\pmod {x^{n}-1}$$
$$S_{M_{s}(b)}(x)=\sum_{i=0}^{n-1}b_{si}x^{i}\equiv\sum_{j=0}^{n-1}b_{j}x^{js^{*}}\equiv S_{b}(x^{s^{*}})\pmod {x^{n}-1}$$
$$S_{M_{s}(a)}(x)\equiv S_{a}(x^{s^{*}})\pmod {x^{n}-1}$$
where $s^{*}\in \mathbb{Z}$, $s^{*}s\equiv 1\pmod n$. We get
$$\gcd \left(S_{L^{r}(a)}(x), S_{L^{r}(b)}(x), \frac{x^{n}-1}{x-1}\right)=\gcd \left(x^{-r}S_{a}(x), x^{-r}S_{b}(x), \frac{x^{n}-1}{x-1}\right)=\gcd \left(S_{a}(x), S_{b}(x), \frac{x^{n}-1}{x-1}\right),$$
$$\gcd \left(S_{L^{r}(a)+L^{r}(b)}(x), \frac{x^{n}-1}{x-1}\right)=\gcd \left(x^{-r}S_{a+b}(x), \frac{x^{n}-1}{x-1}\right)=\gcd \left(S_{a+b}(x), \frac{x^{n}-1}{x-1}\right).$$
By Theorem \ref{th2}, $w(a, b)$ and $w(L^{r}(a), L^{r}(b))$ have the same linear complexity.\\
Similarly,
$$\gcd \left(S_{M_{s}(a)}(x), S_{M_{s}(b)}(x), \frac{x^{n}-1}{x-1}\right)=\gcd \left(S_{a}(x^{s^{*}}), S_{b}(x^{s^{*}}), \frac{x^{n}-1}{x-1}\right),$$
$$\gcd \left(S_{M_{s}(a)+M_{s}(b)}(x), \frac{x^{n}-1}{x-1}\right)=\gcd \left(S_{a+b}(x^{s^{*}}), \frac{x^{n}-1}{x-1}\right).$$

Let $\theta$ be a primitive root of 1 in an extension field of $\mathbb{F}_{2}$. Then for $1\leq \lambda\leq n-1$, $x=\theta^{\lambda}$ is a zero of $S_{a}(x^{s^{*}})(S_{b}(x^{s^{*}}), ~\textrm{or} ~S_{a+b}(x^{s^{*}}))\Longleftrightarrow \theta^{\lambda s^{*}}$ is a zero of $S_{a}(x)(S_{b}(x)~\textrm{or} ~S_{a+b}(x))$. By $s\in \mathbb{Z}_{n}^{*}$ we know that $\lambda\mapsto \lambda s$ is a permutation on $\mathbb{Z}_{n}^{*}$ . Therefore $w(a, b)$, $w(L^{r}(a), L^{r}(b))$($r\in \mathbb{Z}_{n}$) and $w(M_{s}(a), M_{s}(b))$($s\in \mathbb{Z}_{n}^{*}$) have the same linear complexity. Since the group $G$ is generated by such $L^{r}$ and $M_{s}$. We completes the proof of Corollary \ref{co4}.
\end{proof}

\section{ Construction of Interleaved Sequences with Maximum Linear Complexity $2n+2$}\label{sec5}
 In this section we construct several series of interleaved sequences $w=w(a,b)$ having the maximum linear complexity $LC(w)=2n+2$ by known binary sequences $a,b\in \sum'(n)$. For any $a\in \sum'(n)$, by Corollary $4$ we know that $\bar{a}, \sigma(a)\in \sum'(n)$ $(\sigma\in G)$. For any $a,b\in \sum'(n)$ we consider $LC(w)$ for $w=w(a', b')$ where $a'=\bar{a}$ or $\sigma(a)$, $b'=\bar{b}$ or $\tau(b)$ for $\sigma, \tau \in G$. By Corollary $4$,
 $$LC(w(\bar{a}, b))=LC(w(a,b)),\quad LC(w(\sigma(a), \tau(b)))=LC(w(a, \sigma^{-1}\tau(b))).$$
Therefore we may assume $a'=a$ and $b'=\sigma(b)$ $(\sigma\in G)$ without loss of generality.

The first known series of sequences in $\sum'(n)$ is $m$-sequences. Let $q=2^{l}(l\geq 2)$, $n=q-1$, $\gamma$ be a primitive element of $\mathbb{F}_{q}$, namely $\mathbb{F}^{*}_{q}=\langle\gamma\rangle=\{\gamma^{i}:0\leq i \leq n-1\}$. Let $\alpha \in \mathbb{F}^{*}_{q}$, $T:\mathbb{F}_{q}\rightarrow \mathbb{F}_{2}$ be the trace mapping. The $m$-sequence with period $n=q-1$ is $a=a(\alpha, \gamma)=\{a_{i}\}^{n-1}_{i=0}$ defined by
 $$a_{i}=a_{i}(\alpha, \gamma)=T(\alpha\gamma^{i})\quad (0\leq i\leq n-1).$$

Let $g_{a}(x)=\gcd \left(S_{a}(x), \frac{x^{n}-1}{x-1}\right)$, where $S_{a}(x)=\sum\limits_{i=0}^{n-1} a_{i}x^{i}\in \mathbb{F}_{2}[x]$. It is well known that
$$g_{a}(x)=\gcd \left(S_{a}(x), \frac{x^{n}-1}{x-1}\right)=\frac{x^{n}-1}{(x-1)m_{\gamma}(x)},$$
where $m_{\gamma}(x)$ is a primitive polynomial of degree $l$ (the minimal polynomial of $\gamma^{-1}$ over $\mathbb{F}_{2}$). Therefore $\deg g_{a}(x)=n-l-1$. For any $t\in \mathbb{Z}^{*}_{n}$, $\gamma'=\gamma^{t}$ is also a primitive element of $\mathbb{F}_{q}$. Let $b=(b_{i}=T(\alpha'{\gamma'}^{i}))_{i=0}^{n-1}$ be another $m$-sequence $(\alpha'\in \mathbb{F}^{*}_{q})$. Then $g_{b}(x)=\gcd \left(S_{b}(x), \frac{x^{n}-1}{x-1}\right)=\frac{x^{n}-1}{(x-1)m_{\gamma'}(x)}$ when $\alpha=\alpha'$, $b=M_t(a)$. When $m_{\gamma}(x)=m_{\gamma'}(x)$, $a$ and $b$ are shift-equivalent. From these facts we can see that for $0\leq r\leq n-1$,
\begin{align*}
& \gcd \left(S_{a}(x),\ S_{L^{r}(b)}(x),\ \frac{x^{n}-1}{x-1}\right)=\gcd \left(S_{a+L^{r}(b)}(x),\ \frac{x^{n}-1}{x-1}\right)=g_{a+L^{r}(b)}(x)
\\
&=\left\{ \begin{array}{ll}
\frac{x^{n}-1}{(x-1)m_{\gamma}(x)},& \textrm{if $m_{\gamma}(x)=m_{\gamma'}(x)$, $a\neq L^{r}(b)$;}\\
\frac{x^{n}-1}{(x-1)m_{\gamma}(x)m_{\gamma'}(x)},& \textrm{otherwise}.
\end{array} \right.
\end{align*}
From Theorem \ref{th2} we get $LC(w)=2n+2-2\deg g_{a+L^{r}(b)}(x)=2l+4$ or $4l+4$ for $w=w(a,L^{r}(b))(L^{r}(b)\neq a)$. Comparing with the period $n=2^{l}-1$, the linear complexity of $w$ is too small.

Now we consider some other series of sequences in $\sum'(n)$.
\subsection{Legendre Sequences}
 Let $p\equiv3\pmod 4$ be a prime number. For $1\leq i\leq p-1$, $(\frac{i}{p})$ is the Legendre symbol. There are two binary Legendre sequences $\ell=(\ell_{i})_{i=0}^{p-1}$ and $\ell'=(\ell'_{i})_{i=0}^{p-1}$ with period $p$ defined by $\ell_{0}=0, \ell'_{0}=1, ~\text{and for} ~1\leq i\leq p-1$,
$$\ell_{i}=\ell'_{i}
 = \left\{ \begin{array}{ll}
1,& \textrm{if $(\frac{i}{p})=1$;}\\
0,& \textrm{if $(\frac{i}{p})=-1$}.
\end{array} \right.$$

It is known that sequences $\ell$ and $\ell'$ (and their complement, shift, sample sequences) belong to $\sum'(p)$ \cite{X. Tang2}. The linear complexity of  $\ell$ has been determined in \cite{C. Ding2}. There are only two sample sequences of $\ell$: $M_{s}(\ell)=\ell$ for $(\frac{s}{p})=1$ and $M_{s}(\ell)=\ell^{*}=(\ell^{*}_{i})_{i=0}^{p-1}$ for $(\frac{s}{p})=-1$, where $\ell^{*}_{0}=0, ~\text{and for}~ 1\leq i\leq p-1$,
$$\ell^{*}_{i}=\ell_{si}=1-\ell_{i}
 = \left\{ \begin{array}{ll}
1,& \textrm{if $(\frac{i}{p})=-1$;}\\
0,& \textrm{if $(\frac{i}{p})=1$}.
\end{array} \right.$$
Since $\overline{\ell^{\ast}}=\ell'$, by Corollary \ref{co4}(2), it is reduced to consider $w=w(\ell,b)$ for $b=L^{r}(\ell)$ or $L^{r}(\ell')$.

\begin{theorem}\label{th5} Let $p\equiv3\ (mod\ 4)$ be a prime number, $p\geq7$, $\ell$ be the Legendre sequence with period $p$, $b=L^{r}(\ell')$, $1\leq r\leq p-1$. Then for interleaved sequence $w=w(\ell, b)$, $LC(w)=2p+2$.
\end{theorem}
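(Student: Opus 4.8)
The plan is to invoke Corollary~\ref{co3}: since $\ell\in\sum'(p)$ and, by Corollary~\ref{co4}(1), $b=L^{r}(\ell')\in\sum'(p)$ as well, it suffices to show that $Z_{\ell+b}=\emptyset$, i.e.\ that $S_{\ell}(\theta^{\lambda})+S_{b}(\theta^{\lambda})\neq0$ for every $\lambda$ with $1\le\lambda\le p-1$, where $\theta$ is a primitive $p$-th root of $1$ in an extension of $\mathbb{F}_{2}$. First I would simplify the two generating functions. Because $\ell$ and $\ell'$ differ only in position $0$ (where $\ell_{0}=0$, $\ell'_{0}=1$) and agree elsewhere, we have $S_{\ell'}(x)=S_{\ell}(x)+1$ in $\mathbb{F}_{2}[x]$; combined with the shift identity $S_{L^{r}(\ell')}(x)\equiv x^{-r}S_{\ell'}(x)\pmod{x^{p}-1}$ established in the proof of Corollary~\ref{co4}, this gives, on setting $u=\theta^{-r\lambda}$ and $s=S_{\ell}(\theta^{\lambda})$,
$$S_{\ell+b}(\theta^{\lambda})=s+u\bigl(s+1\bigr)=s(1+u)+u.$$
Since $1\le r,\lambda\le p-1$ we have $r\lambda\not\equiv0\pmod p$, so $u$ is a primitive $p$-th root of unity; in particular $u\neq0$ and $u\neq1$. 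The problem thus reduces to showing $s(1+u)+u\neq0$ for each admissible value of $s$.

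Next I would pin down $s=S_{\ell}(\theta^{\lambda})$. Writing $S_{\ell}(x)=\sum_{i\in\mathrm{QR}}x^{i}$, where $\mathrm{QR}$ and $\mathrm{NR}$ denote the quadratic residues and non-residues modulo $p$, the substitution $i\mapsto\lambda i$ shows that $s=\eta_{0}:=\sum_{i\in\mathrm{QR}}\theta^{i}$ when $\lambda\in\mathrm{QR}$ and $s=\eta_{1}:=\sum_{i\in\mathrm{NR}}\theta^{i}$ when $\lambda\in\mathrm{NR}$, with $\eta_{0}+\eta_{1}=\sum_{j=1}^{p-1}\theta^{j}=1$. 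The Frobenius map $x\mapsto x^{2}$ fixes or interchanges $\{\eta_{0},\eta_{1}\}$ according to whether $2$ is a quadratic residue modulo $p$, and this is exactly the split by $p\bmod 8$: if $p\equiv7\pmod 8$ then $2\in\mathrm{QR}$, Frobenius fixes each $\eta_{j}$, so $\{\eta_{0},\eta_{1}\}=\{0,1\}\subseteq\mathbb{F}_{2}$; if $p\equiv3\pmod 8$ then $2\in\mathrm{NR}$, Frobenius swaps $\eta_{0},\eta_{1}$, so they are conjugate over $\mathbb{F}_{2}$ and hence the two roots of the unique irreducible quadratic $y^{2}+y+1$, lying in $\mathbb{F}_{4}\setminus\mathbb{F}_{2}$.

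The conclusion then follows by a short case check. If $p\equiv7\pmod 8$ then $s\in\{0,1\}$, and $s(1+u)+u$ equals $u\neq0$ when $s=0$ and equals $(1+u)+u=1\neq0$ when $s=1$. If $p\equiv3\pmod 8$ then $s\in\mathbb{F}_{4}\setminus\mathbb{F}_{2}$ satisfies $s^{2}=s+1$; a vanishing $s(1+u)+u=0$ would force $u=s/(s+1)=s/s^{2}=s^{-1}\in\mathbb{F}_{4}$, which is impossible since $u$ has order $p\ge7$ while $|\mathbb{F}_{4}^{\times}|=3$ (this is where $p\ge7$, in particular $p\neq3$, is used). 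In every case $S_{\ell+b}(\theta^{\lambda})\neq0$, so $Z_{\ell+b}=\emptyset$ and Corollary~\ref{co3} yields $LC(w)=2p+2$.

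The routine part is the algebraic manipulation; the conceptual crux—and the step I would flag as the main point to get right—is twofold. The \emph{interleaving with $\ell'$ rather than $\ell$} is essential: the extra constant in $S_{\ell'}=S_{\ell}+1$ produces the inhomogeneous term $u$ in $s(1+u)+u$, and it is precisely this term that prevents $S_{\ell+b}(\theta^{\lambda})$ from vanishing (note that with $b=L^{r}(\ell)$ one would instead get $s(1+u)$, which does vanish whenever $s=0$). The second delicate point is the correct determination of the Gaussian-period values $\eta_{0},\eta_{1}$ over $\mathbb{F}_{2}$ and their clean split by $p\bmod 8$ via the Frobenius action; once these are in hand the rest is formal.
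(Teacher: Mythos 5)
Your proposal is correct and follows essentially the same route as the paper: reduce via Corollary~\ref{co3} to showing $S_{\ell}(\theta^{\lambda})+\theta^{-r\lambda}\bigl(S_{\ell}(\theta^{\lambda})+1\bigr)\neq0$, then split by $p\bmod 8$, getting $S_{\ell}(\theta^{\lambda})\in\{0,1\}$ when $p\equiv7\pmod 8$ and $S_{\ell}(\theta^{\lambda})\in\mathbb{F}_{4}\setminus\mathbb{F}_{2}$ when $p\equiv3\pmod 8$, where the vanishing is ruled out by comparing the order $p$ of $\theta^{-r\lambda}$ with the order $3$ of an element of $\mathbb{F}_{4}^{\times}$. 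Your Frobenius/Gaussian-period framing is just a repackaging of the paper's direct computation of $S_{\ell}(\theta^{\lambda})^{2}$, so there is no substantive difference.
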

\begin{proof} We have $S_{\ell}(x)=\sum\limits_{\substack{i=1 \\ (\frac{i}{p})=1}}^{p-1}x^{i}\in \mathbb{F}_{2}[x]$, $S_{\ell'}(x)=1+S_{\ell}(x)$ and $S_{b}(x)\equiv x^{p-r}S_{\ell'}(x)\pmod {x^{p}-1}$. Therefore $S_{\ell+b}(x)\equiv S_{\ell}(x)+x^{p-r}(S_{\ell}(x)+1)\pmod {x^{p}-1}$.

Let $\theta$ be a $p$-th primitive root of $1$ in an extension field $F$ of $\mathbb{F}_{2}$. We know from Theorem \ref{th2} that
\begin{align*}
LC(w)=2p+2& \Longleftrightarrow \gcd \left(S_{\ell}(x)+x^{p-r}(S_{\ell}(x)+1),\ \frac{x^{p}-1}{x-1}\right)=1\\
& \Longleftrightarrow \text{For each}\ \lambda, 1\leq \lambda\leq p-1,\  S_{\ell}(\theta^{\lambda})+\theta^{-r\lambda}(S_{\ell}(\theta^{\lambda})+1)\neq0.
\end{align*}
We consider two cases on $p\equiv3 \pmod4$.

\begin{case}\label{ca1} $p\equiv7\pmod 8$. In this case $(\frac{2}{p})=1$ and for $1\leq \lambda \leq p-1$ we have (in $F$)
$$S_{\ell}(\theta^{\lambda})^{2}=\left(\sum\limits_{(\frac{i}{p})=1}\theta^{\lambda i}\right)^{2}=\sum\limits_{(\frac{i}{p})=1}\theta^{2\lambda i}=\sum\limits_{(\frac{j}{p})=1}\theta^{\lambda j}=S_{\ell}(\theta^{\lambda})$$
which implies that $S_{\ell}(\theta^{\lambda})=0$ or $1$. Then
\begin{align*}
S_{\ell}(\theta^{\lambda})+\theta^{-r\lambda}(S_{\ell}(\theta^{\lambda})+1)
 = \left\{ \begin{array}{ll}
1,& \textrm{if\ $S_{\ell}(\theta^{\lambda})=1$;}\\
\theta^{-r\lambda}\neq 0,& \textrm{if\ $S_{\ell}(\theta^{\lambda})=0$}.
\end{array} \right.
\end{align*}
Therefore $LC(w)=2p+2$.
\end{case}

\begin{case}\label{ca2}
$p\equiv3\pmod8$. Then $(\frac{2}{p})=-1$ and for $1\leq \lambda\leq p-1$,
$$S_{\ell}(\theta^{\lambda})^{2}=\sum\limits_{(\frac{i}{p})=1}\theta^{2\lambda i}=\sum\limits_{(\frac{j}{p})=-1}\theta^{j\lambda}=\sum\limits_{(\frac{j}{p})=1}\theta^{j\lambda}+1+\sum\limits_{j=0}^{p-1}\theta^{j\lambda}=S_{\ell}(\theta^{\lambda})+1$$
which implies that $S_{\ell}(\theta^{\lambda})=w\in \mathbb{F}_{4}$ where $w^{2}=w+1$. Then
$$S_{\ell}(\theta^{\lambda})+\theta^{-r\lambda}(S_{\ell}(\theta^{\lambda})+1)=w+\theta^{-r\lambda}w^{2}.$$
If $w+\theta^{-r\lambda}w^{2}=0$, then $\theta^{-r\lambda}=w^{2}$. From $1\leq r,  \lambda\leq p-1$ we know that the order of $\theta^{-r\lambda}$ is $p$. On the other hand, the order of $w^{2}$ is $3$. We get a contradiction by assumption $p\neq3$. Therefore $LC(w)=2p+2$.
\end{case}
\end{proof}

\subsection{Hall's Sequences}
 Let $p=4x^{2}+27$ be a prime number, $\mathbb{F}^{*}_{p}=\langle g\rangle$ where $g$ is a primitive root $(\bmod p)$, $p-1=6f$, $D=\langle g^{6}\rangle$. Then $$D_{\lambda}=g^{\lambda}D\ \  (0\leq \lambda\leq5)$$ are the cyclotomic classes of order 6 in $\mathbb{F}_{p}$. The Hall's sequence with period $p$ is the binary sequence $h=(h_{i})^{p-1}_{i=0}$ defined by for $0\leq i\leq p-1$,
\begin{align*}
 h_{i}= \left\{ \begin{array}{ll}
1, & \textrm{if $i\in D_{0}\cup D_{1}\cup D_{3}$;}\\
0, & \textrm{otherwise}.
\end{array} \right.
\end{align*}

It is proved that $h$ and its complement, shift, sample sequences belong to $\sum'(p)$. It is easy to see that if $s$ and $s'$ in the same class $D_{\lambda}$, then $M_{s}(h)=M_{s'}(h)=M_{g^{\lambda}}(h)$. There are only six sample sequences of $h$: $M_{g^{\lambda}}(h) (0\leq \lambda\leq 5)$. By Corollary $4(2)$, it is reduced to consider $w=w(h, b)$ for $b=L^{r}M_{g^{s}}(h)$, $0\leq r\leq p-1$, $0\leq s\leq5$.

In order to compute the linear complexity of such sequence $w(h,b)$, we need some preparations. Define the following ``Gauss periods" of order $6$ on $\mathbb{F}_{p}$ by
 $$\eta_{\lambda}(x)=\sum\limits_{\substack{i=1 \\ i\in D_{\lambda}}}^{p-1}x^{i}\in \mathbb{F}_{2}[x]\ (0\leq \lambda\leq5),\quad \xi_{\lambda}(x)=\eta_{\lambda}(x)+\eta_{\lambda+3}(x)\ (0\leq\lambda\leq2).$$
 Then $S_{h}(x)=\eta_{0}(x)+\eta_{1}(x)+\eta_{3}(x)$. Let $\theta$ be a $p$-th primitive root of $1$ in some extension field $F$ of $\mathbb{F}_{2}$. Then for each $j$, $1\leq j\leq p-1$, $\theta^{j}$ is also a $p$-th primitive root of $1$ and $\eta_{\lambda}(\theta^{j}), \xi_{\lambda}(\theta^{j})\in F$.

  We have the following facts. Most of them have been proved in \cite{J-H. Kim}.
\begin{fact}
From $-1=g^{\frac{p-1}{2}}=g^{3f}$ we know $f$ is odd which implies that $-1\in D_{3}$. From
\begin{align*}
\left(\frac{2}{p}\right)
 = \left\{ \begin{array}{ll}
1,& \textrm{if $p\equiv7\pmod8$;}\\
-1,& \textrm{if $p\equiv3\pmod8$}.
\end{array} \right.
\end{align*}
and $2$ is a cubic residue $\pmod p$ (\cite{IR}, Theorem 5, p.36]) we know that $2\in D_{0}$ for $p\equiv7\pmod 8$, and $2\in D_{3}$ for $p\equiv3\pmod8$.
\end{fact}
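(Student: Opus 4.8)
The statement bundles together three classical number-theoretic facts about the prime $p=4x^{2}+27$, and the plan is to reduce the two membership questions---``which class contains $-1$?'' and ``which class contains $2$?''---to the quadratic and cubic character of these elements, using the standard description of the cyclotomic classes as cosets. Concretely, since $\mathbb{F}_{p}^{*}=\langle g\rangle$ is cyclic of order $p-1=6f$ and $D_{\lambda}=g^{\lambda}\langle g^{6}\rangle$, an element $g^{k}$ lies in $D_{k\bmod 6}$; hence the quadratic residues are exactly $D_{0}\cup D_{2}\cup D_{4}$ (even exponents) and the cubic residues are exactly $D_{0}\cup D_{3}$ (exponents divisible by $3$, so $3k\equiv 0$ or $3\pmod 6$). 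I would state these two coset descriptions first, as they turn each assertion of the Fact into the intersection of a ``quadratic'' union of classes with a ``cubic'' one.

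For the claim on $-1$, I would use that $g^{(p-1)/2}$ is the unique element of order $2$, so $-1=g^{(p-1)/2}=g^{3f}$, and it then remains to see that $f$ is odd. This follows immediately from the shape of $p$: from $p=4x^{2}+27$ we get $\tfrac{p-1}{2}=2x^{2}+13$, which is odd, and since $\tfrac{p-1}{2}=3f$ with $3$ odd, $f$ must be odd. Consequently $3f\equiv 3\pmod 6$ and $-1=g^{3f}\in D_{3}$.

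For the claims on $2$, I would invoke two classical results. First, the second supplement to quadratic reciprocity, $\left(\tfrac{2}{p}\right)=(-1)^{(p^{2}-1)/8}$, gives $\left(\tfrac{2}{p}\right)=1$ for $p\equiv 7\pmod 8$ and $\left(\tfrac{2}{p}\right)=-1$ for $p\equiv 3\pmod 8$; equivalently, $2$ is a quadratic residue precisely when $p\equiv 7\pmod 8$. Second, Gauss's cubic residue criterion \cite{IR} states that for $p\equiv 1\pmod 3$ the integer $2$ is a cubic residue if and only if $p$ is representable as $p=L^{2}+27M^{2}$; since $p=4x^{2}+27=(2x)^{2}+27\cdot 1^{2}$, this representation is at hand with $L=2x$, $M=1$, so $2$ is a cubic residue and therefore $2\in D_{0}\cup D_{3}$.

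Finally I would combine the two pieces of information about $2$. When $p\equiv 7\pmod 8$, $2$ is a quadratic residue, so $2\in(D_{0}\cup D_{2}\cup D_{4})\cap(D_{0}\cup D_{3})=D_{0}$; when $p\equiv 3\pmod 8$, $2$ is a quadratic non-residue, so $2\in(D_{1}\cup D_{3}\cup D_{5})\cap(D_{0}\cup D_{3})=D_{3}$. This yields both cases of the Fact. None of the steps is genuinely hard; the only point requiring care---and the place I would be most careful to get right---is the correct invocation of the cubic residue criterion, namely verifying that the given form $p=4x^{2}+27$ is exactly the representation $p=L^{2}+27M^{2}$ the criterion demands, together with the bookkeeping that matches ``quadratic residue'' and ``cubic residue'' to the right unions of classes $D_{\lambda}$.
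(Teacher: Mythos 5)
Your proof is correct and takes essentially the same route as the paper: the paper justifies this Fact inline using exactly the three ingredients you use (writing $-1=g^{(p-1)/2}=g^{3f}$ with $f$ odd, the second supplement for $\left(\frac{2}{p}\right)$, and the cubic residuacity of $2$ cited from Ireland--Rosen), combined with the same coset bookkeeping identifying quadratic residues with $D_{0}\cup D_{2}\cup D_{4}$ and cubic residues with $D_{0}\cup D_{3}$. You simply fill in the details the paper leaves implicit, namely that $f$ is odd because $(p-1)/2=2x^{2}+13$ is odd, and that Gauss's criterion applies because $p=(2x)^{2}+27\cdot 1^{2}$.
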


\begin{fact}
(\cite{J-H. Kim}) For $S_{h}(x)=\sum\limits_{i=0}^{p-1}h_{i}x^{i}$ and $\beta=\theta^{j}\ (1\leq j\leq p-1)$,
\begin{align*}
S_{h}(\beta)S_{h}(\beta^{-1})
 = \left\{ \begin{array}{ll}
0,& \textrm{if $p\equiv7\pmod8$;}\\
1,& \textrm{if $p\equiv3\pmod8$}.
\end{array} \right.
\end{align*}
\end{fact}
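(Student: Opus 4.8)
The plan is to sidestep the order-$6$ cyclotomic-number machinery and use only the single property of $h$ that is already in hand, namely that $h\in\sum'(p)$ has ideal autocorrelation. The point is that $S_h(\beta)S_h(\beta^{-1})$ is just a generating function for the difference distribution of the support $H=\{i:h_i=1\}=D_0\cup D_1\cup D_3$, and ideal autocorrelation determines this distribution completely, uniformly in $\beta$.

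First I would expand $S_h(\beta)=\sum_{i\in H}\beta^{i}$ and $S_h(\beta^{-1})=\sum_{i'\in H}\beta^{-i'}$ to get
$$S_h(\beta)S_h(\beta^{-1})=\sum_{i,i'\in H}\beta^{\,i-i'}=\sum_{\tau=0}^{p-1}d(\tau)\,\beta^{\tau},\qquad d(\tau)=|H\cap(H-\tau)|,$$
where $d(\tau)$ is the integer number of representations of $\tau$ as a difference of two elements of $H$. The diagonal contributes $d(0)=|H|=3f=(p-1)/2=:k$. For $\tau\neq0$ I would convert the hypothesis $A_h(\tau)=-1$ into an exact value of $d(\tau)$: bookkeeping the four counts $N_{ab}(\tau)=|\{i:h_i=a,\ h_{i+\tau}=b\}|$ yields $A_h(\tau)=p-4k+4d(\tau)$, so $A_h(\tau)=-1$ forces $d(\tau)=k-(p+1)/4=(p-3)/4=:\lambda$ for every $\tau\neq0$ (equivalently, $H$ is a $(p,k,\lambda)$-difference set).

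Next I would reduce modulo $2$ in the extension field $F$. The integer coefficients collapse to their parities, and since $\beta$ is a nontrivial $p$-th root of unity we have $\sum_{\tau=1}^{p-1}\beta^{\tau}=-1=1$, so
$$S_h(\beta)S_h(\beta^{-1})\equiv(k\bmod 2)+(\lambda\bmod 2)\sum_{\tau=1}^{p-1}\beta^{\tau}\equiv k+\lambda\pmod 2,$$
a value that does not depend on $j$, exactly as the statement asserts. A one-line parity check then closes the argument: for $p=8t+7$ both $k=4t+3$ and $\lambda=2t+1$ are odd, so $k+\lambda\equiv0$; for $p=8t+3$ we have $k=4t+1$ odd and $\lambda=2t$ even, so $k+\lambda\equiv1$. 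This gives $0$ when $p\equiv7\pmod 8$ and $1$ when $p\equiv3\pmod 8$.

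I expect the only delicate point to be the translation step, i.e.\ relating the $\pm1$-valued autocorrelation $A_h(\tau)$ to the integer difference count $d(\tau)$ and confirming that $d(\tau)$ equals the \emph{exact} constant $\lambda$ rather than merely a constant modulo $2$, since the final mod-$2$ reduction is only legitimate once the integer values are fixed. Everything else is routine arithmetic. A more computational alternative would follow \cite{J-H. Kim} and evaluate the order-$6$ cyclotomic numbers attached to $p=4x^{2}+27$, using Fact 1 together with the Gauss periods $\eta_\lambda$ and $\xi_\lambda$; but that route needs the full cyclotomic-number table for this special prime, whereas the difference-set argument uses nothing beyond ideal autocorrelation.
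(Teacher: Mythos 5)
Your proof is correct, and it takes a genuinely different route from the paper. The paper does not prove Fact 2 at all: it is quoted from the Kim--Song reference \cite{J-H. Kim}, where the evaluation rests on the sextic cyclotomic structure of the special primes $p=4x^{2}+27$ (Gauss periods $\eta_{\lambda}$, cyclotomic numbers of order $6$), the same machinery the paper then develops explicitly for Facts 3 and 4. You instead exploit only the standing hypothesis that $h\in\sum'(p)$: the identity $S_h(\beta)S_h(\beta^{-1})=\sum_{\tau}d(\tau)\beta^{\tau}$ with $d(\tau)=|H\cap(H-\tau)|$, the counting $A_h(\tau)=p-4k+4d(\tau)$ (which correctly pins down the exact integer value $d(\tau)=(p-3)/4$ for $\tau\neq 0$, not just its parity --- the delicate point you rightly flagged), the collapse $\sum_{\tau=1}^{p-1}\beta^{\tau}=1$ in characteristic $2$, and the parity check $k+\lambda\equiv 0$ or $1\pmod 2$ according as $p\equiv 7$ or $3\pmod 8$. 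All of these steps check out, including the arithmetic ($k=4t+3$, $\lambda=2t+1$ for $p=8t+7$; $k=4t+1$, $\lambda=2t$ for $p=8t+3$). What your approach buys: it is self-contained, it avoids importing any cyclotomy beyond what is already packaged in the assumption ``$h$ has ideal autocorrelation,'' and it actually proves a stronger statement --- the conclusion holds for \emph{every} binary sequence with period $p\equiv 3\pmod 4$, ideal autocorrelation, and support size $(p-1)/2$, not just Hall's sequence. What it does not buy: Facts 3 and 4, which the paper genuinely needs downstream (the individual values $S_h(\theta^{j})$ on each class $D_{\mu}$, not merely the product $S_h(\beta)S_h(\beta^{-1})$), still require the sextic-residue analysis, so the cyclotomic machinery cannot be discarded from the paper as a whole. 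One small bookkeeping remark: the claim that $h$ has ideal autocorrelation is itself a nontrivial theorem of Hall proved via order-$6$ cyclotomy, so the cyclotomic content is hidden in your hypothesis rather than eliminated; but since the paper takes $h\in\sum'(p)$ as known, your use of it is entirely legitimate.
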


\begin{fact} For $\beta=\theta^{j}\ (1\leq j\leq p-1)$, $\xi_{\lambda}(\beta)\in \mathbb{F}_{2}=\{0,1\}$. Moreover, there exists $\beta$ such that $\xi_{1}(\beta)=1$, $\xi_{0}(\beta)=\xi_{2}(\beta)=0$. From now on we take $\theta$ to be this $\beta$. Namely, we assume that $\xi_{1}(\theta)=1$,
$\xi_{0}(\theta)=\xi_{2}(\theta)=0$.

This result has been proved for case $p\equiv 7 \pmod8$ in \cite{J-H. Kim}. Since $2\in D_3$, then for $0\leq \lambda \leq 2$ we have $\xi_{\lambda}(\beta)^2=\xi_{\lambda}(\beta)$ which implies that $\xi_{\lambda}\in \{0,1 \}$. It can be seen that the proof for
$p\equiv 7\pmod 8$ also works for case $p\equiv3\pmod8$.
\end{fact}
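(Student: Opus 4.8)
The statement has two parts to establish: (a) that each $\xi_\lambda(\beta)$ lies in $\mathbb{F}_2$ for every $\beta=\theta^j$, and (b) that $j$ can be chosen so that $\xi_1(\beta)=1$, $\xi_0(\beta)=\xi_2(\beta)=0$. The plan is to treat them in this order, reducing everything to the arithmetic of cubic cosets. First I would rewrite the $\xi_\lambda$ over the group $C=\langle g^{3}\rangle$ of nonzero cubic residues: since $D_\lambda\cup D_{\lambda+3}=g^\lambda\langle g^{3}\rangle=g^\lambda C$, we have $\xi_\lambda(\beta)=\sum_{i\in g^\lambda C}\beta^{i}$. By Fact 1, $2$ is a cubic residue, i.e. $2\in C$, so multiplication by $2$ permutes each coset $g^\lambda C$. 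Applying the Frobenius on $F\supseteq\mathbb{F}_2$ then gives $\xi_\lambda(\beta)^{2}=\sum_{i\in g^\lambda C}\beta^{2i}=\sum_{k\in g^\lambda C}\beta^{k}=\xi_\lambda(\beta)$, whence $\xi_\lambda(\beta)\in\mathbb{F}_2$. This is the computation already indicated in the statement; I would stress that it needs only $2\in C$, so the cases $p\equiv3$ and $p\equiv7\pmod 8$ are handled uniformly without separating them.

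Next I would reduce part (b) to a single assertion about $\theta$ itself. Recording how $\xi_\lambda$ transforms under $\beta\mapsto\beta^{j}$: for $j\in g^{\mu}C$, substituting $k=ij$ yields $\xi_\lambda(\theta^{j})=\sum_{i\in g^\lambda C}\theta^{ij}=\sum_{k\in g^{\lambda+\mu}C}\theta^{k}=\xi_{\lambda+\mu}(\theta)$ with indices taken modulo $3$. Thus as $j$ runs through the three cubic cosets, the triple $(\xi_0,\xi_1,\xi_2)(\theta^{j})$ runs through the three cyclic shifts of $(\xi_0,\xi_1,\xi_2)(\theta)$. Consequently a $\beta$ with the required pattern exists if and only if \emph{exactly one} of $\xi_0(\theta),\xi_1(\theta),\xi_2(\theta)$ equals $1$: one then simply picks $\mu$ so as to rotate that single $1$ into position $\lambda=1$. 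So the whole matter collapses to proving that precisely one of the three values is $1$.

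The crux is that last assertion, and it is where I expect the real work to sit. Summing, $\xi_0(\theta)+\xi_1(\theta)+\xi_2(\theta)=\sum_{i=1}^{p-1}\theta^{i}=-1=1$ in $\mathbb{F}_2$, so an odd number of them equal $1$, i.e. either exactly one or all three. To exclude the all-three case I would invoke order-$3$ cyclotomy: the $\xi_\lambda(\theta)$ are the reductions, modulo a prime above $2$, of the classical order-$3$ Gaussian periods, hence are roots of the mod-$2$ reduction of the integral period polynomial $t^{3}+t^{2}-\tfrac{p-1}{3}\,t-c$ (with $4p=L^{2}+27M^{2}$ governing the constant $c$, which I do not need). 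Because $p-1=6f$, the linear coefficient $\tfrac{p-1}{3}=2f$ is even, so the reduced cubic has no linear term; but the only monic cubic over $\mathbb{F}_2$ whose single root is $1$ with multiplicity three is $(t+1)^{3}=t^{3}+t^{2}+t+1$, which does have linear coefficient $1$. This contradiction rules out all three being $1$, leaving exactly one $\xi_\lambda(\theta)=1$, and then the cyclic-shift reduction of the previous paragraph completes part (b).

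The only genuinely substantive ingredient is therefore the vanishing modulo $2$ of the second elementary symmetric function $\xi_0\xi_1+\xi_1\xi_2+\xi_0\xi_2=-\tfrac{p-1}{3}$ (equivalently, the absence of a linear term in the reduced period polynomial); everything else is bookkeeping with the Frobenius and with cosets of $C$. This is precisely where the order-$6$ structure $p-1=6f$ enters, and it is the step I would write out most carefully — either by quoting the standard order-$3$ period polynomial or by computing $\sum_{i<j}\xi_i\xi_j$ directly through the cyclotomic numbers of order $3$. I anticipate the argument to be completely uniform in $p\bmod 8$, so that the claim ``the proof for $p\equiv7\pmod 8$ also works for $p\equiv3\pmod 8$'' becomes transparent rather than requiring a separate verification.
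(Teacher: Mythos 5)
Your proposal is correct, but on the substantive point it takes a genuinely different route from the paper. Part (a), the claim $\xi_{\lambda}(\beta)\in\mathbb{F}_{2}$, is handled identically: both you and the paper use that $2$ lies in $D_{0}\cup D_{3}=\langle g^{3}\rangle$ (Fact 1) so that the Frobenius fixes each $\xi_{\lambda}(\beta)$, and you are right that this is uniform in $p\bmod 8$. The difference is part (b), the existence of $\beta$ with $(\xi_{0},\xi_{1},\xi_{2})(\beta)=(0,1,0)$: the paper does not actually prove this --- it cites \cite{J-H. Kim} for $p\equiv 7\pmod 8$ and asserts that the same proof carries over to $p\equiv 3\pmod 8$ --- whereas you give a self-contained argument: (i) the transformation rule $\xi_{\lambda}(\theta^{j})=\xi_{\lambda+\mu}(\theta)$ for $j\in g^{\mu}\langle g^{3}\rangle$ reduces existence to the claim that exactly one of $\xi_{0}(\theta),\xi_{1}(\theta),\xi_{2}(\theta)$ equals $1$; (ii) since the three values sum to $1$, an odd number of them are $1$; (iii) the all-three case is excluded because the second elementary symmetric function of the three values is the mod-$2$ reduction of $-\frac{p-1}{3}=-2f\equiv 0\pmod 2$ (the linear coefficient of the classical cubic period polynomial $t^{3}+t^{2}-\frac{p-1}{3}t-c$), while it would equal $3\equiv 1$ if all three values were $1$. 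Step (iii) is the genuinely new ingredient, and it is correct. What your route buys: the paper's assertion that ``the proof for $p\equiv 7\pmod 8$ also works for $p\equiv 3\pmod 8$'' becomes a transparent, case-free verification, since nothing in (i)--(iii) mentions $p\bmod 8$; the reader need not consult the external reference. What it costs: you must justify identifying $\{\xi_{0}(\theta),\xi_{1}(\theta),\xi_{2}(\theta)\}$ with the reductions, modulo a prime above $2$, of the classical order-$3$ Gaussian periods; this identification holds only up to a cyclic relabeling (depending on which primitive $p$-th root of unity $\theta$ is), but since your argument uses only symmetric functions of the three values, the ambiguity is harmless --- a careful write-up should say this explicitly, or else compute $\sum_{\lambda<\mu}\xi_{\lambda}(\theta)\xi_{\mu}(\theta)$ directly via the cyclotomic numbers of order $3$, as you suggest.
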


\begin{fact} If $p\equiv7\pmod8$, then $\eta_{1}(\theta)=\eta_{2}(\theta)=\eta_{5}(\theta)=1$, $\eta_{0}(\theta)=\eta_{3}(\theta)=\eta_{4}(\theta)=0$ and
\begin{align*}
S_{h}(\theta^{j})
 = \left\{ \begin{array}{ll}
1,& \textrm{if $j\in D_{0}$;}\\
0,& \textrm{if $j\in \mathbb{F}^{*}_p\setminus D_{0}$}.
\end{array} \right.
\end{align*}
If $p\equiv3\pmod8$ then $\eta_{1}(\theta)=w$, $\eta_{4}(\theta)=w^{2}$ or $\eta_{1}(\theta)=w^{2}$, $\eta_{4}(\theta)=w$, where
$w\in \mathbb{F}_{4}$, $w^{2}+w+1=0$, and $\eta_{0}(\theta)=\eta_{3}(\theta)=1$, $\eta_{2}(\theta)=\eta_{5}(\theta)=0$,
\begin{align*}
S_{h}(\theta^{j})
 = \left\{ \begin{array}{ll}
w,& \textrm{if $j\in D_{0}$;}\\
w^{2},& \textrm{if $j\in D_{3}$;}\\
1,& \textrm{if $j\in \mathbb{F}^{*}_{p}\setminus (D_{0}\cup D_{3})$}.
\end{array} \right.
or \ \ S_{h}(\theta^{j})
 = \left\{ \begin{array}{ll}
w^{2},& \textrm{if $j\in D_{0}$;}\\
w,& \textrm{if $j\in D_{3}$;}\\
1,& \textrm{if $j\in \mathbb{F}^{*}_{p}\setminus (D_{0}\cup D_{3})$}.
\end{array} \right.
\end{align*}
\end{fact}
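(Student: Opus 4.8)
The plan is to pin down all six Gauss periods $\eta_\lambda(\theta)$ from three sources of information --- the Frobenius map $x\mapsto x^2$, the inversion $x\mapsto x^{-1}$, and the general substitution $\theta\mapsto\theta^j$ --- combined with the normalization of Fact~3 and the product identity of Fact~2. By Fact~1, multiplying exponents by $2$ fixes each class $D_\lambda$ when $p\equiv 7\pmod 8$ and sends $D_\lambda$ to $D_{\lambda+3}$ when $p\equiv 3\pmod 8$, while multiplying by $-1$ always sends $D_\lambda$ to $D_{\lambda+3}$. More generally, for $j\in D_\mu$ one has $\eta_\lambda(\theta^j)=\eta_{\lambda+\mu}(\theta)$ (indices mod $6$), so that $S_h(\theta^j)=\eta_\mu(\theta)+\eta_{1+\mu}(\theta)+\eta_{3+\mu}(\theta)$. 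I would record the normalization of Fact~3 as $\eta_0(\theta)=\eta_3(\theta)$, $\eta_2(\theta)=\eta_5(\theta)$, $\eta_1(\theta)+\eta_4(\theta)=1$, and abbreviate $a=\eta_0(\theta)=\eta_3(\theta)$ and $c=\eta_2(\theta)=\eta_5(\theta)$.

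\textbf{Case $p\equiv 7\pmod 8$.} Here $2\in D_0$, so Frobenius fixes every class and $\eta_\lambda(\theta)^2=\eta_\lambda(\theta)$, forcing each $\eta_\lambda(\theta)\in\mathbb{F}_2$. Running $\mu$ from $0$ to $5$ gives the six values $\eta_1,\,1+c,\,a,\,1+\eta_1,\,1+c,\,a$ for $S_h(\theta^j)$. I would then invoke Fact~2 in the form $S_h(\theta^j)S_h(\theta^{-j})=0$, pairing $j\in D_\mu$ with $-j\in D_{\mu+3}$: the pair $(D_1,D_4)$ gives $(1+c)^2=0$, hence $c=1$, and the pair $(D_2,D_5)$ gives $a^2=0$, hence $a=0$, while $(D_0,D_3)$ gives $\eta_1(1+\eta_1)=0$ automatically. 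Since $\{\eta_1,\eta_4\}=\{0,1\}$ is determined only up to the residual choice $\theta\leftrightarrow\theta^{-1}$, which swaps $\eta_1$ and $\eta_4$ but preserves the normalization, I fix $\theta$ so that $\eta_1(\theta)=1$. The table then reads off as $S_h(\theta^j)=1$ for $j\in D_0$ and $0$ otherwise.

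\textbf{Case $p\equiv 3\pmod 8$.} Now $2\in D_3$, so Frobenius swaps $\lambda$ and $\lambda+3$, giving $\eta_\lambda(\theta)^2=\eta_{\lambda+3}(\theta)$ and $\eta_\lambda(\theta)\in\mathbb{F}_4$. The normalization forces $a^2=a$ and $c^2=c$, so $a,c\in\mathbb{F}_2$, while $\eta_1+\eta_1^2=\eta_1+\eta_4=1$ gives $\eta_1^2+\eta_1+1=0$, i.e. $\eta_1\in\{w,w^2\}$ and $\eta_4=\eta_1^2$. Using $w+w^2=1$, the same tabulation yields the six values $w,\,1+c,\,a,\,w^2,\,1+c,\,a$ for $S_h(\theta^j)$. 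Applying Fact~2 in the form $S_h(\theta^j)S_h(\theta^{-j})=1$: the pair $(D_1,D_4)$ gives $(1+c)^2=1$, hence $c=0$; the pair $(D_2,D_5)$ gives $a^2=1$, hence $a=1$; and $(D_0,D_3)$ gives $w\cdot w^2=1$ automatically. This produces $\eta_0=\eta_3=1$, $\eta_2=\eta_5=0$, and the stated formula for $S_h(\theta^j)$, the two displayed alternatives being exactly the Galois-conjugate choices $\eta_1=w$ or $\eta_1=w^2$.

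The genuine difficulty is conceptual, not computational: the normalization of Fact~3 constrains only the three sums $\xi_\lambda$, so on its own it cannot fix the absolute values $a$ and $c$ nor resolve the split of $\{\eta_1,\eta_4\}$. The key realization is that the product identity of Fact~2 holds at every $\beta=\theta^j$, not merely at the normalized root $\theta$; transporting it through the classes $D_1$ and $D_2$ is precisely what determines $a$ and $c$. I would finish with the sanity check $\sum_{\lambda=0}^{5}\eta_\lambda(\theta)=\sum_{i=1}^{p-1}\theta^i=1$, which both final tuples $(0,1,1,0,0,1)$ and $(1,w,0,1,w^2,0)$ indeed satisfy.
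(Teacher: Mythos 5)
Your proof is correct, and for the case the paper actually proves itself ($p\equiv 3\pmod 8$) it is essentially the paper's own argument: both rest on the transport rule $\eta_{\lambda}(\theta^{j})=\eta_{\lambda+\mu}(\theta)$ for $j\in D_{\mu}$, the normalization $\xi_{0}(\theta)=\xi_{2}(\theta)=0$, $\xi_{1}(\theta)=1$ from Fact 3, and the product identity of Fact 2 applied to the pairs $(\theta^{j},\theta^{-j})$ with $-1\in D_{3}$. The differences are organizational and in scope. The paper proceeds sequentially (take $j=1$, then $j\in D_{1}$, then $j\in D_{2}$, each time solving a sum-and-product system for the pair of values), whereas you parametrize by $a=\eta_{0}(\theta)=\eta_{3}(\theta)$, $c=\eta_{2}(\theta)=\eta_{5}(\theta)$, tabulate all six values of $S_{h}(\theta^{j})$ first, and only then impose Fact 2 pairwise across $(D_{\mu},D_{\mu+3})$; this is the same computation, slightly more systematically arranged, with the Frobenius relation $\eta_{\lambda}(\theta)^{2}=\eta_{\lambda+3}(\theta)$ replacing the paper's quadratic-equation step $X^{2}+X+1=0$. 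Your write-up adds two things of genuine value: (i) you prove the case $p\equiv 7\pmod 8$ from scratch, which the paper simply cites from Kim--Song \cite{J-H. Kim}; and (ii) you make explicit that the normalization of Fact 3 pins down $\theta$ only up to the swap $\theta\leftrightarrow\theta^{-1}$ (equivalently $\eta_{\lambda}\leftrightarrow\eta_{\lambda+3}$), so the definite assignment $\eta_{1}(\theta)=1$ in the first case requires one further choice of $\theta$ --- a point the paper's statement glosses over, and which matters since the wrong choice would put the value $1$ of $S_{h}(\theta^{j})$ on $D_{3}$ rather than $D_{0}$. The closing sanity check $\sum_{\lambda=0}^{5}\eta_{\lambda}(\theta)=1$ is a nice touch.
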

For $p\equiv7\pmod8$, the proof is given in \cite{J-H. Kim}. Now we show the case $p\equiv3\pmod8$ by similar method. By Fact (2) we know that $S_{h}(\theta^{j})S_{h}(\theta^{-j})=1\ (1\leq j\leq 5)$, and by Fact (3), $\xi_{1}(\theta)=1$, $\xi_{0}(\theta)=\xi_{2}(\theta)=0$ where $\xi_{\lambda}(\theta)=\eta_{\lambda}(\theta)+\eta_{\lambda+3}(\theta)$. We have
$$S_{h}(\theta)=\eta_{0}(\theta)+\eta_{1}(\theta)+\eta_{3}(\theta)=\eta_{1}(\theta)+\xi_{0}(\theta)=\eta_{1}(\theta),$$
\begin{align*}
S_{h}(\theta^{-1})& =\eta_{3}(\theta)+\eta_{4}(\theta)+\eta_{0}(\theta)\ (\text{since} -1\in D_{3})\\
& =\eta_{4}(\theta).
\end{align*}
Therefore $S_{h}(\theta)+S_{h}(\theta^{-1})=\eta_{1}(\theta)+\eta_{4}(\theta)=\xi_{1}(\theta)=1$. Then by $S_{h}(\theta)S_{h}(\theta^{-1})=1$, we get $S_{h}(\theta)=w$, $S_{h}(\theta^{-1})=w^{2}$ or $S_{h}(\theta)=w^{2}$, $S_{h}(\theta^{-1})=w$. Therefore $\eta_{1}(\theta)=S_{h}(\theta)=w$, $\eta_{4}(\theta)=S_{h}(\theta^{-1})=w^{2}$ or $\eta_{1}(\theta)=S_{h}(\theta)=w^{2}$, $\eta_{4}(\theta)=S_{h}(\theta^{-1})=w$.

\noindent Next we take $j\in D_{1}$. Then
$$S_{h}(\theta^{j})=\sum\limits_{i\in D_{0}\cup D_{1}\cup D_{3}}\theta^{ji}=\eta_{1}(\theta)+\eta_{2}(\theta)+\eta_{4}(\theta)=\eta_{2}(\theta)+1,$$
\begin{align*}
S_{h}(\theta^{-j})& =\eta_{4}(\theta)+\eta_{5}(\theta)+\eta_{1}(\theta)\ (\text{since} -1\in D_{3}, \text{and} -j\in D_{4})\\
& =\eta_{5}(\theta)+1.
\end{align*}
Therefore $S_{h}(\theta^{j})+S_{h}(\theta^{-j})=\xi_{2}(\theta)=0$. Then by $S_{h}(\theta^{j})S_{h}(\theta^{-j})=1$ we get $S_{h}(\theta^{j})=S_{h}(\theta^{-j})=1$ and $\eta_{2}(\theta)=\eta_{5}(\theta)=0$. At last we take $j\in D_{2}$, then
$$S_h(\theta^{j})=\eta_{2}(\theta)+\eta_{3}(\theta)+\eta_{5}(\theta)=\eta_{3}(\theta),$$
$$S_h(\theta^{-j})=\eta_{5}(\theta)+\eta_{0}(\theta)+\eta_{2}(\theta)=\eta_{0}(\theta).$$
\noindent Therefore $S_h(\theta^{j})+S_h(\theta^{-j})=\eta_{0}(\theta)+\eta_{3}(\theta)=0$, and $S_{h}(\theta^{j})S_{h}(\theta^{-j})=1$ we get $S_h(\theta^{j})=S_h(\theta^{-j})=1$ and $\eta_{0}(\theta)=\eta_{3}(\theta)=1$. From $-1\in D_{3}$, we know that if $j\in D_{\lambda}$, then $S_{h}(\theta^{-j})=S_{h}(\theta^{t}), t\in D_{\lambda+3}$. Therefore we have $S_{h}(\theta^{j})=w, w^{2}, 1$ or $S_{h}(\theta^{j})=w^{2}, w, 1$ for $j\in D_{0}, D_{3}$ and $\mathbb{F}^{*}_{p}\setminus (D_{0}\cup D_{3})$ respectively.

Fact $(4)$ gives the values of $\eta_{\lambda}(\theta)$ and $S_{h}(\theta^{j})$ for $0\leq \lambda\leq 5$ and $1\leq j\leq p-1$ as shown in the first two lines of Table (\uppercase\expandafter{\romannumeral1}) and Table (\uppercase\expandafter{\romannumeral2}). From these values we can compute the linear complexity of $w=w(h,\sigma(h))$ for all $\sigma\in G$.
\begin{table*}[h]
\centering
\caption{$p\equiv7\pmod8$}
  \setlength{\tabcolsep}{3mm}
	\begin{tabular}{cc|cccccc}
		
		& $\lambda$ & 0 & 1 & 2 & 3 & 4 & 5	\\
		\hline
		& $\eta_{\lambda}(\theta)$ & 0 & 1 & 1 & 0 & 0 & 1 \\
        \hline
		$j\in D_{\lambda}$ & $S_{h}(\theta^{j})$ & 1 & 0 & 0 & 0 & 0 & 0\\
		 & $S_{l}(\theta^{j})$ & 1 & 0 & 1 & 0 & 1 & 0\\
	\end{tabular}
\end{table*}
\begin{table*}[h]
\centering
\caption{$p\equiv3\pmod8$}
  \setlength{\tabcolsep}{3mm}
	\begin{tabular}{cc|cccccc}
		
		& $\lambda$ & 0 & 1 & 2 & 3 & 4 & 5	\\
		\hline
		& $\eta_{\lambda}(\theta)$ & 1 & $w$ & 0 & 1 & $w^{2}$ & 0 \\
        \hline
		$j\in D_{\lambda}$ & $S_{h}(\theta^{j})$ & $w$ & 1 & 1 & $w^{2}$ & 1 & 1\\
		 & $S_{l}(\theta^{j})$ & $w$ & $w^{2}$ & $w$ & $w^{2}$ & $w$ & $w^{2}$\\
	\end{tabular}
\end{table*}
\begin{table*}[h]
\centering
\setlength{\tabcolsep}{3mm}
	\begin{tabular}{cc|cccccc}
		
		& $\lambda$ & 0 & 1 & 2 & 3 & 4 & 5	\\
		\hline
		& $\eta_{\lambda}(\theta)$ & 1 & $w^{2}$ & 0 & 1 & $w$ & 0 \\
        \hline
		$j\in D_{\lambda}$ & $S_{h}(\theta^{j})$ & $w^{2}$ & 1 & 1 & $w$ & 1 & 1\\
		 & $S_{l}(\theta^{j})$ & $w^{2}$ & $w$ & $w^{2}$ & $w$ & $w^{2}$ & $w$\\
	\end{tabular}
\end{table*}

\begin{example} Suppose that $p=4x^{2}+27$. Compute the linear complexity of $w=w(h,b)$, $b=L^{2}M_{j}(h)$, $j\in D_{4}$. We have $\eta_{\lambda}(\theta^{j})=\eta_{\lambda+4}(\theta)$, and
\begin{align*}
S_{b}(x)& =\sum\limits_{i=0}^{p-1}h_{ij+2}x^{i}\equiv\sum\limits_{k=0}^{p-1}h_{k+2}x^{kj^{*}}\pmod{ x^{p}-1} \quad (j^{*}j\equiv1(\bmod p), ~\text{and}~ j^{*}\in D_{-4}=D_{2})\\
& \equiv x^{-2j^{*}}S_{h}(x^{j^{*}})\pmod{x^{p}-1}.
\end{align*}
For each $\lambda \in D_{\mu}$, $S_{b}(\theta^{\lambda})=\theta^{\alpha(\lambda)}S_{h}(\theta^{\beta(\lambda)})$ where
$$\alpha(\lambda)\equiv-2j^{*}\lambda\not\equiv 0 \pmod p,\quad \beta(\lambda)=j^{*}\lambda\in D_{\mu+2}.$$

\setlength{\parindent}{1em} (\uppercase\expandafter{\romannumeral1}). For $p\equiv7\pmod8$, we know from Table (\uppercase\expandafter{\romannumeral1}) that for $\lambda\in D_{\mu}$,
$$S_{h}(\theta^{\lambda})=0\Longleftrightarrow1\leq \mu\leq 5.$$
$$S_{b}(\theta^{\lambda})=0\Longleftrightarrow \beta(\lambda)=\lambda j^{*}\not\in D_{0}\Longleftrightarrow \lambda\not\in D_{-2}=D_{4}\Longleftrightarrow \mu\not\equiv4 \pmod6.$$
Therefore
$$S_{h}(\theta^{\lambda})=S_{b}(\theta^{\lambda})=0\Longleftrightarrow \lambda\in D_{1}\cup D_{2}\cup D_{3}\cup D_{5}.$$
On the other hand, for $\lambda\in D_{1}\cup D_{2}\cup D_{3}\cup D_{5}$, $S_{h}(\theta^{\lambda})+S_{b}(\theta^{\lambda})=0$. For $\lambda\in D_{0}$, we have $\beta(\lambda)\in D_2$ and
$$S_{h}(\theta^{\lambda})+S_{b}(\theta^{\lambda})=1+\theta^{\alpha(\lambda)}\cdot0=1\neq0.$$

\noindent For $\lambda\in D_{4}$ we have $\beta(\lambda)\in D_{0}$ and $S_{h}(\theta^{\lambda})+S_{b}(\theta^{\lambda})=0+\theta^{\alpha(\lambda)}\cdot1\neq0$. Therefore
$$\deg \gcd\left(S_{h}(x), S_{b}(x), \frac{x^{p}+1}{x+1}\right)=\deg \gcd\left(S_{h}(x)+S_{b}(x), \frac{x^{p}+1}{x+1}\right)=|D_{1}\cup D_{2}\cup D_{3}\cup D_{5}|=4\cdot\frac{p-1}{6},$$
and by $Theorem\ 2$, $LC(w)=2p+2-\frac{4(p-1)}{3}=\frac{1}{3}(2p+10)$.

\setlength{\parindent}{1em} (\uppercase\expandafter{\romannumeral2}). For $p\equiv3\pmod8$, from Table (\uppercase\expandafter{\romannumeral2}) we know that $S_{h}(\theta^{\lambda})\neq0$ for all $1\leq \lambda\leq p-1$. Therefore $\deg \gcd\left(S_{h}(x), S_{b}(x), \frac{x^{p}+1}{x+1}\right)=0$. On the other hand, $S_{h}(\theta^{\lambda})+S_{b}(\theta^{\lambda})=0\Longleftrightarrow S_{h}(\theta^{\lambda})=\theta^{\alpha(\lambda)}S_{h}(\theta^{\beta(\lambda)})$.

The order of $\theta^{\alpha(\lambda)}$ is $p=4x^{2}+27>3$. From Table (\uppercase\expandafter{\romannumeral2}) we know that the order of $S_{h}(\theta^{\lambda})S_{h}(\theta^{\beta(\lambda)})^{-1}$ is at most $3$ since it belongs to $\mathbb{F}^{*}_{4}=\{1, w, w^{2}\}$. This implies that $S_{h}(\theta^{\lambda})+S_{b}(\theta^{\lambda})\neq0$ for all $1\leq \lambda\leq p-1$. Therefore $\deg \gcd\left(S_{h}(x)+S_{b}(x), \frac{x^{p}+1}{x+1}\right)=0$, and $LC(w)=2p+2$.
\end{example}

Now we present a series of $w=w(h, \sigma(h))$ having the maximum linear complexity $2p+2$.

\begin{theorem}\label{th5} Let $p=4x^{2}+27\equiv3\pmod8$ (which means that $2|x)$, $h$ be the Hall sequence with period $p$, $b=\sigma(h)$, $\sigma=L^{r}M_{s}\in G$ where $1\leq r\leq p-1$, $s\in D_{\mu}$. Then $LC(w)=2p+2$ for $w=w(h,b)$.
\end{theorem}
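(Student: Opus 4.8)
The plan is to invoke Corollary~\ref{co3}, which reduces the claim $LC(w)=2p+2$ to showing that $Z_{h+b}=\emptyset$, i.e.\ that $S_h(\theta^\lambda)+S_b(\theta^\lambda)\neq 0$ for every $\lambda$ with $1\leq\lambda\leq p-1$. This is exactly the situation analysed in part~(II) of the Example above for the special case $b=L^2M_j(h)$, $j\in D_4$, and the same mechanism works in full generality here.

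First I would make $S_b(\theta^\lambda)$ explicit. Since $b=\sigma(h)=L^rM_s(h)$, the congruences $S_{L^r(h)}(x)\equiv x^{-r}S_h(x)$ and $S_{M_s(h)}(x)\equiv S_h(x^{s^*})\pmod{x^p-1}$ (with $s^*s\equiv 1\pmod p$) established in the proof of Corollary~\ref{co4} combine to give
$$S_b(\theta^\lambda)=\theta^{-r\lambda}\,S_h(\theta^{\lambda s^*}).$$
Hence, working in characteristic $2$, the equation $S_h(\theta^\lambda)+S_b(\theta^\lambda)=0$ is equivalent—after dividing by the nonzero quantity $S_h(\theta^{\lambda s^*})$—to the single scalar identity
$$\theta^{-r\lambda}=S_h(\theta^\lambda)\,S_h(\theta^{\lambda s^*})^{-1}.$$

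The decisive step is a comparison of multiplicative orders, exploiting the hypothesis $p\equiv 3\pmod 8$. By Fact~4 (summarised in Table~(\uppercase\expandafter{\romannumeral2})), for this congruence class every value $S_h(\theta^j)$ with $1\leq j\leq p-1$ lies in $\mathbb{F}_4^*=\{1,w,w^2\}$; in particular both $S_h(\theta^\lambda)$ and $S_h(\theta^{\lambda s^*})$ are nonzero, so the right-hand side above is a well-defined element of the cyclic group $\mathbb{F}_4^*$ of order $3$ and therefore has order dividing $3$. On the other hand, since $1\leq r,\lambda\leq p-1$ and $p$ is prime we have $r\lambda\not\equiv 0\pmod p$, so the left-hand side $\theta^{-r\lambda}$ is a nontrivial $p$-th root of unity and has order exactly $p$. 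Because $p=4x^2+27>3$ is prime we have $\gcd(p,3)=1$, so an element of order $p$ cannot coincide with one of order dividing $3$. This contradiction shows the displayed identity has no solution for any $\lambda$, whence $Z_{h+b}=\emptyset$ and $LC(w)=2p+2$ by Corollary~\ref{co3}.

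I expect no serious obstacle: the arithmetic input (the entries of Table~(\uppercase\expandafter{\romannumeral2}), valid precisely because $2\in D_3$ forces $S_h(\theta^j)\in\mathbb{F}_4$) is already supplied by Fact~4, and the reduction to $Z_{h+b}=\emptyset$ is Corollary~\ref{co3}. The only point requiring care is to confirm that passing to $b=L^rM_s(h)$ does not make any exponent vanish modulo $p$; but $s\in D_\mu\subseteq\mathbb{F}_p^*$ guarantees that $s^*$ exists and that $\lambda s^*\not\equiv 0\pmod p$, so every $S_h$-value invoked is read off Table~(\uppercase\expandafter{\romannumeral2}) as a nonzero element of $\mathbb{F}_4^*$ (and the choice between the two variants of the table in Fact~4 is immaterial, since both keep all values in $\mathbb{F}_4^*$). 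The whole argument is thus the clean analogue of Example~1(II), with the $p\equiv 3\pmod 8$ table ensuring $S_h$ never vanishes and the order comparison $p>3$ closing the case.
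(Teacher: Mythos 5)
Your proposal is correct and follows essentially the same route as the paper: reduce via Corollary~\ref{co3} to showing $S_{h}(\theta^{\lambda})+S_{b}(\theta^{\lambda})\neq 0$, express $S_b$ through $S_h$ as in Corollary~\ref{co4}, use Table~(\uppercase\expandafter{\romannumeral2}) to place all values of $S_h$ at nontrivial $p$-th roots of unity in $\mathbb{F}_4^{*}$, and rule out cancellation by comparing an element of order $p>3$ with one of order dividing $3$ (the paper phrases this by cubing, $\theta^{3j\alpha(\sigma)}=1$, which is the same order argument). The only cosmetic discrepancy is your exponent $\theta^{-r\lambda}$ versus the paper's $\theta^{-rs'\lambda}$, stemming from the opposite operator-composition convention for $L^{r}M_{s}$; both exponents are nonzero modulo $p$, so the argument is unaffected.
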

\begin{proof}
From Corollary \ref{co3} we know that $LC(w)=2p+2$ if and only if $\gcd\left(S_{h+b}(x), \frac{x^{p}-1}{x-1}\right)=1$ in $\mathbb{F}_{2}[x]$ where
$$S_{h+b}(x)\equiv S_{h}(x)+x^{\alpha(\sigma)}S_{h}(x^{s'})\pmod {x^{p}-1},$$
and $\alpha(\sigma)=-rs'$, $s's\equiv1\pmod p$ so that $s'\in D_{-\mu}$.

Let $\theta$ be a primitive root of $1$ in an extension of $\mathbb{F}_{2}$. Then $\gcd\left(S_{h+b}(x), \frac{x^{p}-1}{x-1}\right)=1$ if and only if for all $j$, $1\leq j\leq p-1$,
$$0\neq S_{h+b}(\theta^{j})=S_{h}(\theta^{j})+\theta^{\alpha(\sigma)j}S_{h}(\theta^{s'j}).$$
From Table (\uppercase\expandafter{\romannumeral2}) we know that $S_{h}(\theta^{j})\in\{1, w, w^{2}\}$ for all $j$, $1\leq j\leq p-1$. If $S_{h+b}(\theta^{j})=0$ then $\theta^{\alpha(\sigma)j}=S_{h}(\theta^{j})/S_{h}(\theta^{s'j})\in\{1, w, w^{2}\}$. We get $\theta^{3j\alpha(\sigma)}=1$. From $p=4x^{2}+27>3$ and $3j\alpha(\sigma)=-3rs'j\not\equiv0\pmod p$ we know that $\theta^{3j\alpha(\sigma)}\neq1$. This contradiction shows that $\gcd\left(S_{h+b}(x), \frac{x^{p}-1}{x-1}\right)=1$ and $LC(w)=2p+2$.
\end{proof}
\begin{remark}By using Table (\uppercase\expandafter{\romannumeral1}), it can be proved that if $p\equiv7\pmod 8$ then $LC(w)<2p+2$ for $w=w(h, \sigma(h))$ and all $\sigma\in G$.
\end{remark}

Now we consider sequences interleaved by Legendre and Hall sequences. Recall that the Legendre sequence $\ell=(\ell_{i})_{i=0}^{p-1}$, $\ell'=(\ell'_{i})^{p-1}_{i=0}$ with period $p\equiv3\pmod 4$ are defined by $\ell_{0}=0$, $\ell_{0}'=1$, and for $1\leq i\leq p-1$,
\begin{align*}
\ell_{i}=\ell_{i}'
 = \left\{ \begin{array}{ll}
1,& \textrm{if $(\frac{i}{p})=1$;}\\
0,& \textrm{if $(\frac{i}{p})=-1$}.
\end{array} \right.
\end{align*}
\begin{theorem}Let $p=4x^{2}+27\equiv3\pmod 8$ be a prime number, $\ell$ and $\ell'$ be the Legendre sequences with period $p$, $h$ be the Hall sequence with period $p$. Then for $\sigma=L^{r}M_{s}\in G\ (0\leq r\leq p-1, 1\leq s\leq p-1)$ and $b=\sigma(h)$, we have

(1). $LC(w)=2p+2$ for $w=w(\ell, b)$ if $r\neq0$ or $``r=0$ and $(\frac{s}{p})=-1".$

(2). $LC(w)=2p+2$ for $w=w(\ell', b)$ if $r\neq0$ or $``r=0$ and $(\frac{s}{p})=1".$
\end{theorem}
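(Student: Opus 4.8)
The plan is to reduce both parts to the criterion of Corollary \ref{co3}: since $\ell,\ell'$ and $b=\sigma(h)$ all lie in $\sum'(p)$, we have $LC(w)=2p+2$ for $w=w(a,b)$ with $a\in\{\ell,\ell'\}$ if and only if $S_a(\theta^{\lambda})+S_b(\theta^{\lambda})\neq0$ for every $\lambda$, $1\le\lambda\le p-1$. First I would assemble the three inputs needed to evaluate this sum. Writing $s'$ for the inverse of $s$ modulo $p$ (so $s'\in D_{-\mu}$ when $s\in D_{\mu}$), the transformation formulas give $S_b(\theta^{\lambda})=\theta^{-rs'\lambda}S_h(\theta^{s'\lambda})$, exactly as in the computation carried out for $w(h,\sigma(h))$. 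From Case \ref{ca2} of the Legendre analysis, $S_{\ell}(\theta^{\lambda})^{2}=S_{\ell}(\theta^{\lambda})+1$, so $S_{\ell}(\theta^{\lambda})\in\{w,w^{2}\}$; and since $S_{\ell'}(x)=1+S_{\ell}(x)$ we likewise obtain $S_{\ell'}(\theta^{\lambda})=1+S_{\ell}(\theta^{\lambda})\in\{w^{2},w\}$. Finally, Table (\uppercase\expandafter{\romannumeral2}) gives $S_h(\theta^{s'\lambda})\in\{1,w,w^{2}\}=\mathbb{F}_{4}^{\ast}$. In particular all three quantities are nonzero and lie in $\mathbb{F}_{4}$.

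The case $r\neq0$ is then immediate and identical for both parts, reusing the order argument from the Hall theorem. If $S_a(\theta^{\lambda})+S_b(\theta^{\lambda})=0$ for some $\lambda$, then $\theta^{-rs'\lambda}=S_a(\theta^{\lambda})/S_h(\theta^{s'\lambda})\in\mathbb{F}_{4}^{\ast}$ has multiplicative order dividing $3$, whence $\theta^{-3rs'\lambda}=1$. But $\theta$ has order $p=4x^{2}+27>3$ and $p\nmid 3rs'\lambda$ (as $p$ is prime, $p\neq3$, $p\nmid\lambda$, $p\nmid s'$, and $p\nmid r$ since $1\le r\le p-1$), a contradiction. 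Hence $LC(w)=2p+2$ for every $r\neq0$.

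The substance of the theorem is the case $r=0$, where $S_b(\theta^{\lambda})=S_h(\theta^{s'\lambda})$ and the clean order argument is unavailable; here I would argue by cyclotomic bookkeeping. A collision $S_a(\theta^{\lambda})=S_h(\theta^{s'\lambda})$ can only occur when $S_h(\theta^{s'\lambda})\in\{w,w^{2}\}$, that is when $s'\lambda\in D_{0}\cup D_{3}$, equivalently $\lambda\in D_{\mu}\cup D_{\mu+3}$ (using $s'\in D_{-\mu}$); for all other $\lambda$ one has $S_h(\theta^{s'\lambda})=1\neq S_a(\theta^{\lambda})$, so those are automatically collision-free. To resolve the two families $\lambda\in D_{\mu}$ and $\lambda\in D_{\mu+3}$ I must pin down which of $w,w^{2}$ the Legendre period equals. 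The key observation is that $S_{\ell}(\theta^{\lambda})=\sum_{i\in QR}\theta^{\lambda i}$ depends only on the quadratic character of $\lambda$, and that the quadratic residues are $QR=D_{0}\cup D_{2}\cup D_{4}$, so $S_{\ell}(\theta)=\eta_{0}(\theta)+\eta_{2}(\theta)+\eta_{4}(\theta)$ can be read off Table (\uppercase\expandafter{\romannumeral2}); this yields $S_{\ell}(\theta^{\lambda})=w$ for $\lambda\in QR$ and $w^{2}$ for $\lambda\in QNR$ (and the flipped values for $S_{\ell'}$). Comparing with $S_h(\theta^{s'\lambda})=w$ on $\lambda\in D_{\mu}$ and $=w^{2}$ on $\lambda\in D_{\mu+3}$, and noting that $\lambda\in D_{\mu}$ is a residue iff $\mu$ is even while $\lambda\in D_{\mu+3}$ is a residue iff $\mu$ is odd, both families turn out collision-free precisely when $\mu$ is odd for $a=\ell$ (i.e.\ $(\tfrac{s}{p})=-1$) and precisely when $\mu$ is even for $a=\ell'$ (i.e.\ $(\tfrac{s}{p})=1$), which is exactly the stated hypothesis. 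I expect this parity bookkeeping to be the only delicate point; the verification that the two tabulated rows of Table (\uppercase\expandafter{\romannumeral2}) give the same conclusion follows because $S_{\ell}$ and $S_h$ interchange $w\leftrightarrow w^{2}$ together, leaving every comparison invariant.
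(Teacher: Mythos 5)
Your proposal is correct and follows essentially the same route as the paper's proof: reduction via Corollary \ref{co3}, the transformation formula $S_b(\theta^{\lambda})=\theta^{-rs'\lambda}S_h(\theta^{s'\lambda})$, the order argument (order of $\theta^{-rs'\lambda}$ is $p>3$ versus order at most $3$ in $\mathbb{F}_4^{*}$) for $r\neq0$, and Table (\uppercase\expandafter{\romannumeral2}) parity bookkeeping on cyclotomic classes for $r=0$. Your $r=0$ analysis is in fact a more explicit spelled-out version of the paper's terse observation that $2\nmid(\mu+\lambda)$ forces $S_{\ell}(\theta^{j})+S_h(\theta^{s'j})\neq0$, and your closing remark about invariance under the swap $w\leftrightarrow w^{2}$ addresses a point the paper leaves implicit.
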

\begin{proof}For $1\leq j\leq p-1$, $S_{\ell'}(\theta^{j})=S_{\ell}(\theta^{j})+1$ and
\begin{align*}
S_{\ell}(\theta^{j})
 = \left\{ \begin{array}{ll}
S_{\ell}(\theta)=\eta_{0}(\theta)+\eta_{2}(\theta)+\eta_{4}(\theta),& \textrm{if $(\frac{j}{p})=1\ (\textrm{namely}, j\in D_{0}\cup D_{2}\cup D_{4})$;}\\
\eta_{1}(\theta)+\eta_{3}(\theta)+\eta_{5}(\theta)=1+S_{\ell}(\theta),& \textrm{if $(\frac{j}{p})=-1\ (\textrm{namely}, j\in D_{1}\cup D_{3}\cup D_{5})$}.
\end{array} \right.
\end{align*}
From $\eta_{\lambda}(\theta)$ given in Fact (4) we can compute the values of $S_{\ell}(\theta^{j})\ (1\leq j\leq p-1)$ as shown in the last line of Table (\uppercase\expandafter{\romannumeral1}) and (\uppercase\expandafter{\romannumeral2}).
$$S_{\ell+b}(\theta^{j})=S_{\ell}(\theta^{j})+\theta^{-s'rj}S_{h}(\theta^{s'j}),\quad S_{\ell'+b}(\theta^{j})=1+S_{\ell}(\theta^{j})+\theta^{-s'rj}S_{h}(\theta^{s'j}),$$ where $s's\equiv1\pmod p$.
From Table (\uppercase\expandafter{\romannumeral2}) we can see that for $p\equiv3\pmod 8$, $S_{\ell}(\theta^{j})$ and $S_{\ell'}(\theta^{j})=S_{\ell}(\theta^{j})+1$ belong to $\{w, w^{2}\}$ for all $j$, $1\leq j\leq p-1$. Thus
$$S_{\ell+b}(\theta^{j})=0\Leftrightarrow \theta^{-s'rj}=S_{\ell}(\theta^{j})/S_{h}(\theta^{s'j})\in\{1, w, w^{2}\},$$
$$S_{\ell'+b}(\theta^{j})=0\Leftrightarrow \theta^{-s'rj}=(1+S_{\ell}(\theta^{j}))/S_{h}(\theta^{s'j})\in\{1, w, w^{2}\}.$$

If $r\neq0$, $p\nmid (-s'rj)$ and the order of $\theta^{-s'rj}$ is $p=4x^{2}+27>3$. Then from the above equivalent relations we know that $S_{\ell+b}(\theta^{j})\neq0$, $S_{\ell'+b}(\theta^{j})\neq0$ for all $1\leq j\leq p-1$. Therefore by Corollary \ref{co3} we have $LC(w)=2p+2$ for $w=w(\ell, b)$ and $w(\ell', b)$.

If $r=0$, for $(\frac{s'}{p})=(\frac{s}{p})=-1$, let $j\in D_{\mu}$, $s'j\in D_{\lambda}$, then $2\nmid (u+\lambda)$. From Table \uppercase\expandafter{\romannumeral2} we can see that
$$S_{\ell+b}(\theta^{j})=S_{\ell}(\theta^{j})+S_{h}(\theta^{s'j})\neq0,$$
for all $1\leq j\leq p-1$. Therefore $LC(w)=2p+2$ for $w=w(\ell, b)$. Similarly, it can be shown that if $r=0$ and $(\frac{s'}{p})=(\frac{s}{p})=1$, then $S_{\ell'+b}(\theta^{j})=S_{\ell}(\theta^{j})+1+S_{h}(\theta^{s'j})\neq0$ for all $1\leq j\leq p-1$. Therefore  by Corollary \ref{co3} we have $LC(w)=2p+2$ for $w=(\ell', b)$.
\end{proof}
\begin{remark}Using Table (\uppercase\expandafter{\romannumeral1}) we can verify that if $p\equiv7\pmod 8$ then $LC(w)<2p+2$ for $w=w(a, b)$ where $a=\ell$ or $\ell'$ and $b=\sigma(h)\ (\sigma\in G)$.
\end{remark}

\subsection{Twin-prime Sequences}
Let $n=pq$ where $p$ and $q=p+2$ are prime numbers. Then we have the partition
$$\mathbb{Z}_{n}=\{0\}\cup P\cup Q\cup \mathbb{Z}^{*}_{n},\quad \mathbb{Z}^{*}_{n}=D_{0}\cup D_{1}$$
where $P=\{p, 2p,\cdots, (q-1)p\}$, $Q=\{q, 2q,\cdots, (p-1)q\}$,
$$D_{0}=\{i\in \mathbb{Z}^{*}_{n}: \left(\frac{i}{p}\right)\left(\frac{i}{q}\right)=1\},\quad D_{1}=\{i\in \mathbb{Z}^{*}_{n}: \left(\frac{i}{p}\right)\left(\frac{i}{q}\right)=-1\}.$$
The twin-prime sequence $t=(t_{i})_{i=0}^{n-1}$ with period $n=pq$ is defined by $t_{0}=0$ and for $1\leq i\leq n-1$,
\begin{align*}
t_{i}
 = \left\{ \begin{array}{ll}
0,& \textrm{if $i\in Q\cup D_{0}$;}\\
1,& \textrm{if $i\in P\cup D_{1}$}.
\end{array} \right.
\end{align*}
It is known that the sequence $t$ and its complement, shift, sample sequences belong to $\sum'(n)$. The linear complexity of $t$ has been determined in \cite{C. Ding1}. The sequence $t$ has two samples: for $s\in \mathbb{Z}^{*}_{n}$,
\begin{align*}
M_{s}(t)= \left\{ \begin{array}{ll}
t, & \textrm{if $(\frac{s}{p})(\frac{s}{q})=1$;}\\
\tau(t), & \textrm{if $(\frac{s}{p})(\frac{s}{q})=-1$},
\end{array} \right.
\end{align*}
where $\tau(t)=(\tau(t)_{i})_{i=0}^{n-1}$ is defined by $\tau(t)_{0}=0$, for $1\leq i\leq n-1$,
\begin{align*}
\tau(t)_{i}
 = \left\{ \begin{array}{ll}
1,& \textrm{if $i\in P\cup D_{0}$;}\\
0,& \textrm{if $i\in Q\cup D_{1}$}.
\end{array} \right.
\end{align*}
By Corollary \ref{co4}, it is reduced to consider $LC(w)$ for $w=(a, L^r(\tau(b))$, where $a=t$ and $b=t$ or $\tau(t)$.

We have in $\mathbb{F}_{2}[x]$,
$$S_{\tau(t)}(x)=S_{t}(x)+\sum\limits_{i\in \mathbb{Z}^{*}_{n}}x^{i}=S_{t}(x)+\sum\limits_{i\in \mathbb{Z}_{n}}x^{i}+\sum\limits_{i=0}^{p-1}x^{qi}+\sum\limits_{j=0}^{q-1}x^{pj}+1=S_{t}(x)+\frac{x^{n}-1}{x-1}+\frac{x^{n}-1}{x^{q}-1}+\frac{x^{n}-1}{x^{p}-1}+1.$$

Now we compute $S_{t}(x)\pmod {\frac{x^{n}-1}{x-1}}$. We define, for $\varepsilon\in\{\pm1\}$
$$G_{p,\varepsilon}(x)=\sum_{\substack{i=1 \\ (\frac{i}{p})=\varepsilon}}^{p-1}x^{qi},\ \  G_{q,\varepsilon}(x)=\sum_{\substack{j=1 \\ (\frac{j}{q})=\varepsilon}}^{q-1}x^{pj}\in \mathbb{F}_2[x]$$
Then
$$G_{p,1}(x)+G_{p,-1}(x)=\sum\limits_{i=1}^{p-1}x^{qi}=1+\frac{x^{n}-1}{x^{q}-1},\quad G_{q,1}(x)+G_{q,-1}(x)=1+\frac{x^{n}-1}{x^{p}-1}.$$
\begin{lemma}\label{lm8}Let $n=pq$, where $p$ and $q=p+2$ are prime numbers. Then

(1). $S_{t}(x)\equiv G_{q,1}(x)(1+\frac{x^{n}-1}{x^{q}-1})+(G_{p,1}(x)+1)(1+\frac{x^{n}-1}{x^{p}-1})\pmod {x^{n}-1}$.

(2). Let $\theta$ be an n-th primitive root of 1 in an extension field of $\mathbb{F}_{2}$. Then for $1\leq j\leq n-1$,
\begin{align*}
S_{t}(\theta^{j})
 = \left\{ \begin{array}{ll}
\frac{p+1}{2},& \textrm{if $\gcd (j, n)>1$;}\\
G_{q,1}(\theta^{j})+G_{p,1}(\theta^{j})+1,& \textrm{if $\gcd (j, n)=1$}.
\end{array} \right.
\end{align*}

(3). When $\gcd (j, n)=1$,
\begin{align*}
G_{q,1}(\theta^{j})
 = \left\{ \begin{array}{ll}
0\ or\ 1,& \textrm{if $(\frac{2}{q})=1\ (namely, q\equiv\pm1\ (mod\ 8))$;}\\
w,& \textrm{if $(\frac{2}{q})=-1\ (q\equiv\pm3\ (mod\ 8))$},
\end{array} \right.
\end{align*}
\begin{align*}
G_{p,1}(\theta^{j})
 = \left\{ \begin{array}{ll}
0\ or\ 1,& \textrm{if $(\frac{2}{p})=1\ ( p\equiv\pm1\ (mod\ 8))$;}\\
w,& \textrm{if $(\frac{2}{p})=-1\ (p\equiv\pm3\ (mod\ 8))$},
\end{array} \right.
\end{align*}
where $w\in \mathbb{F}_{4}$, $w^{2}+w+1=0$.
\end{lemma}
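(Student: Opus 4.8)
The plan is to prove the three parts in order, using (1) to evaluate (2) and (2) to reach (3). For part (1), I would first strip the two parenthetical factors down to support sums: since $\frac{x^n-1}{x^q-1}=\sum_{i=0}^{p-1}x^{qi}$ and $\frac{x^n-1}{x^p-1}=\sum_{j=0}^{q-1}x^{pj}$, in $\mathbb{F}_2[x]$ one has $1+\frac{x^n-1}{x^q-1}=\sum_{i\in Q}x^i$ and $1+\frac{x^n-1}{x^p-1}=\sum_{j\in P}x^j$. Hence the right-hand side of (1) is $G_{q,1}(x)\sum_{i\in Q}x^i+G_{p,1}(x)\sum_{j\in P}x^j+\sum_{j\in P}x^j$. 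Because $S_t(x)=\sum_{i\in P}x^i+\sum_{i\in D_1}x^i$ and the bare $\sum_{j\in P}x^j$ already matches the $P$-part, it remains to show $\sum_{i\in D_1}x^i=G_{q,1}(x)\sum_{i\in Q}x^i+G_{p,1}(x)\sum_{j\in P}x^j$. I would expand each product and read exponents through the CRT isomorphism $\mathbb{Z}_n\cong\mathbb{Z}_p\oplus\mathbb{Z}_q$: in $G_{q,1}(x)\sum_{i\in Q}x^i=\sum_{(\frac jq)=1}\sum_{i=1}^{p-1}x^{pj+qi}$ the exponent $pj+qi$ lies in $\mathbb{Z}_n^*$, reduces to $qi\pmod p$ and $pj\pmod q$, and as $(j,i)$ runs over $\{\text{QR mod }q\}\times\mathbb{Z}_p^*$ these residues are distinct and exhaust exactly the $m\in\mathbb{Z}_n^*$ with $(\frac mq)=(\frac pq)$; since there is no cancellation in $\mathbb{F}_2$, the product equals $\sum_{(\frac mq)=(\frac pq)}x^m$, and symmetrically $G_{p,1}(x)\sum_{j\in P}x^j=\sum_{(\frac mp)=(\frac qp)}x^m$.

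To finish (1) I would invoke quadratic reciprocity: as $q=p+2$, $(\frac pq)(\frac qp)=(-1)^{(p^2-1)/4}=1$, so $(\frac pq)=(\frac qp)=:\epsilon$. Over $\mathbb{F}_2$ the two products then sum to the symmetric difference of $\{(\frac mq)=\epsilon\}$ and $\{(\frac mp)=\epsilon\}$, and a four-way check on the pair $((\frac mp),(\frac mq))$ shows this symmetric difference is exactly $\{(\frac mp)(\frac mq)=-1\}=D_1$, in both cases $\epsilon=+1$ and $\epsilon=-1$. I expect this CRT-plus-reciprocity bookkeeping to be the main obstacle: the coincidence $(\frac pq)=(\frac qp)$ is essential (the identity fails if the two signs differ), and one must verify there is no overcounting in the products.

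For part (2) I would substitute $x=\theta^j$ into the formula of (1), reading the factors as the polynomials $\sum_{i=0}^{p-1}x^{qi}$ and $\sum_{j=0}^{q-1}x^{pj}$ to avoid a spurious $0/0$. When $p\mid j$ one has $\theta^{jq}=1$, so $\sum_{i=0}^{p-1}\theta^{jqi}=p=1$ and the first factor vanishes, while the second is a nontrivial geometric sum equal to $0$, so its complement is $1$; also $G_{p,1}(\theta^j)=\frac{p-1}{2}$, giving $S_t(\theta^j)=\frac{p-1}{2}+1=\frac{p+1}{2}$. The case $q\mid j$ is symmetric and yields $\frac{q-1}{2}=\frac{p+1}{2}$, so $\gcd(j,n)>1$ always gives $\frac{p+1}{2}$. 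When $\gcd(j,n)=1$ both $\theta^{jp},\theta^{jq}\neq1$, both factors equal $1$, and $S_t(\theta^j)=G_{q,1}(\theta^j)+G_{p,1}(\theta^j)+1$.

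For part (3), for $\gcd(j,n)=1$ the element $\theta^{jp}$ is a primitive $q$-th root of unity, so $G_{q,1}(\theta^j)=\sum_{(\frac{j'}q)=1}(\theta^{jp})^{j'}$ is the quadratic-residue Gauss period attached to $q$ in $\overline{\mathbb{F}_2}$. Squaring uses that Frobenius permutes exponents by multiplication by $2$: if $(\frac 2q)=1$ then squaring fixes the residue set, so $G_{q,1}(\theta^j)^2=G_{q,1}(\theta^j)$ and the value lies in $\mathbb{F}_2=\{0,1\}$; if $(\frac 2q)=-1$ then squaring interchanges residues and non-residues, so $G_{q,1}(\theta^j)^2=G_{q,1}(\theta^j)+1$ (using $\sum_{j'=1}^{q-1}(\theta^{jp})^{j'}=1$), whence $G_{q,1}(\theta^j)$ is a root of $y^2+y+1$, i.e. a primitive cube root of unity $w\in\mathbb{F}_4$. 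The identical argument with $p$ replacing $q$ (and $\theta^{jq}$ a primitive $p$-th root of unity) handles $G_{p,1}(\theta^j)$, completing the lemma.
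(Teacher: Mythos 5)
Your proposal is correct and follows essentially the same route as the paper: part (1) rests on the CRT decomposition $\lambda=pj+qi$ of $\mathbb{Z}_n^*$ together with quadratic reciprocity giving $(\frac{p}{q})(\frac{q}{p})=1$, part (2) is the same evaluation at $\theta^{j}$ via vanishing/non-vanishing geometric sums, and part (3) is the same Frobenius squaring argument splitting on $(\frac{2}{q})=\pm1$ and $(\frac{2}{p})=\pm1$. The only difference is organizational: in (1) you verify the identity starting from the right-hand side, reading each product as the sum over a character-defined subset of $\mathbb{Z}_n^*$ and taking a symmetric difference to recover $D_1$, whereas the paper starts from the definition of $S_t$, writes the $D_1$-sum as $G_{p,1}G_{q,-1}+G_{p,-1}G_{q,1}$, and substitutes $G_{q,-1}=G_{q,1}+1+\frac{x^n-1}{x^p-1}$ (and its $p$-analogue) so that the $G_{p,1}G_{q,1}$ terms cancel over $\mathbb{F}_2$ --- the same computation in the opposite direction.
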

\begin{proof}(1). By the definition of the sequence $t$,
\begin{align*}
S_{t}(x) &=\sum\limits_{j=1}^{q-1}x^{pj}+\sum_{\substack{\lambda\in \mathbb{Z}^{*}_{n} \\ (\frac{\lambda}{p})(\frac{\lambda}{q})=-1}}x^{\lambda}\ (\text{let}\  \lambda=pj+qi)\\
& \equiv1+\frac{x^{n}-1}{x^{p}-1}+\substack{{\sum\limits_{i=1}^{p-1}\sum\limits_{j=1}^{q-1}} \\ {(\frac{j}{q})(\frac{i}{p})=-(\frac{p}{q})(\frac{q}{p})}}x^{pj+qi}\ \pmod {x^{n}-1}.
\end{align*}
\noindent By the quadratic reciprocity law, $(\frac{p}{q})(\frac{q}{p})=(-1)^{\frac{p-1}{2}\frac{q-1}{2}}=(-1)^{\frac{p^{2}-1}{4}}=1$, we get
\begin{align}
S_{t}(x)& \equiv1+\frac{x^{n}-1}{x^{p}-1}+\substack{{\sum\limits_{i=1}^{p-1}\sum\limits_{j=1}^{q-1}} \\ \notag {(\frac{i}{p})(\frac{j}{q})=-1}}x^{pj}\cdot x^{qi} \pmod {x^{n}-1} \\ \notag
& \equiv1+\frac{x^{n}-1}{x^{p}-1}+G_{p,1}(x)G_{q,-1}(x)+G_{p,-1}(x)G_{q,1}(x) \pmod {x^{n}-1} \\ \notag
& \equiv1+\frac{x^{n}-1}{x^{p}-1}+G_{p,1}(x)(G_{q,1}(x)+1+\frac{x^{n}-1}{x^{p}-1})+(1+\frac{x^{n}-1}{x^{q}-1}+G_{p,1}(x))G_{q,1}(x) \pmod {x^{n}-1} \\
& \equiv G_{q,1}(x)(1+\frac{x^{n}-1}{x^{q}-1})+(G_{p,1}(x)+1)(1+\frac{x^{n}-1}{x^{p}-1}) \pmod {x^{n}-1}.\label{e2}
\end{align}
\noindent (2). If $p|j$, $j=p\lambda$, $1\leq \lambda \leq q-1$. By (\ref{e2}) we get
\begin{align*}
S_{t}(\theta^{j})& =S_{t}(\theta^{p\lambda})=G_{q,1}(\theta^{p\lambda})(1+p)+G_{p,1}(\theta^{p\lambda})+1=G_{p,1}(\theta^{p\lambda})+1\\
& =\sum_{\substack{i=1 \\ (\frac{i}{p})=1}}^{p-1} (\theta^{p\lambda})^{qi}+1=1+\sum_{\substack{i=1 \\ (\frac{i}{p})=1}}^{p-1}1=1+\frac{p-1}{2}=\frac{p+1}{2}.
\end{align*}
\noindent If $q|j$, $j=q\lambda$, $1\leq\lambda\leq p-1$, we have
$$S_{t}(\theta^{j})=G_{q,1}(\theta^{q\lambda})+(G_{p,1}(\theta^{q\lambda})+1)(1+q)=G_{q,1}(\theta^{q\lambda})=\frac{q-1}{2}=\frac{p+1}{2}.$$
At last, if $\gcd(j, pq)=1$, then $\frac{x^{n}-1}{x^{q}-1}|_{x=\theta^{j}}$ and $\frac{x^{n}-1}{x^{p}-1}|_{x=\theta^{j}}$ are zero, and $S_{t}(\theta^{j})=G_{q,1}(\theta^{j})+G_{p,1}(\theta^{j})+1$.

\noindent (3). If $(\frac{2}{q})=1$, then
$$G_{q,1}(x)^{2}=(\sum_{\substack{i=1 \\ (\frac{i}{q})=1}}^{q-1}x^{pi})^2\equiv(\sum_{\substack{i=1 \\ (\frac{i}{q})=1}}^{q-1}x^{2pi})\equiv(\sum_{\substack{\lambda=1 \\ (\frac{\lambda}{q})=1}}^{q-1}x^{p\lambda})= G_{q,1}(x) \pmod {x^{n}-1}.$$
Therefore $G_{q,1}(\theta^{j})^{2}=G_{q,1}(\theta^{j})$ which means that $G_{q,1}(\theta^{j})\in \mathbb{F}_{2}$. Similarly, if $(\frac{2}{p})=1$ we have $G_{p,1}(\theta^{j})\in \mathbb{F}_{2}$. If $(\frac{2}{q})=-1$, then
$$G_{q,1}(x)^{2}=(\sum_{\substack{i=1 \\ (\frac{i}{q})=1}}^{q-1}x^{2pi})\equiv(\sum_{\substack{\lambda=1 \\ (\frac{\lambda}{q})=1}}^{q-1}x^{p\lambda})=G_{q,-1}(x)=G_{q,1}(x)+1+\frac{x^{n}-1}{x^{p}-1} \pmod {x^{n}-1}.$$
Since $\gcd(j,n)=1$ we have $G_{q,1}(\theta^{j})^{2}=G_{q,1}(\theta^{j})+1$. Therefore $G_{q,1}(\theta^{j})=w$. Similarly, if $(\frac{2}{p})=-1$ we have $G_{p,1}(\theta^{j})=w$.
\end{proof}
\begin{theorem}
Let $n=pq$ where $p$ and $q=p+2$ are prime numbers, $t$ be the twin-prime sequence with period $n$. Then for $r\in \mathbb{Z}^{*}_{n}$

(1). If $p\equiv1 \pmod4$, then $LC(w)=2n+2$ for $w=w(t,L^{r}(t))$.

(2). If $p\equiv1 \pmod4$, then $LC(w)=2n+2$ for $w=w(t,L^{r}(\tau(b)))$, $b=t$.
\end{theorem}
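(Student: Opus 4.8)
The plan is to invoke Corollary \ref{co3}, which reduces the claim $LC(w)=2n+2$ to verifying that $S_{t+b}(\theta^{j})\neq 0$ for every $j$ with $1\leq j\leq n-1$, where $\theta$ is an $n$-th primitive root of $1$. In both parts the second component is a shift, so using $S_{L^{r}(c)}(x)\equiv x^{-r}S_{c}(x)\pmod{x^{n}-1}$ I would rewrite the quantity to be checked purely in terms of $S_{t}(\theta^{j})$ (and, in part (2), $S_{\tau(t)}(\theta^{j})$) times a power of $\theta$. I would then split the analysis according to whether $\gcd(j,n)=1$ or $\gcd(j,n)>1$, since Lemma \ref{lm8}(2) gives a clean dichotomy for $S_{t}(\theta^{j})$ in exactly these two regimes. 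Throughout, the key elementary fact is that $r\in\mathbb{Z}_{n}^{*}$ and $1\leq j\leq n-1$ force $\theta^{-rj}\neq 1$ (equivalently $1+\theta^{-rj}\neq 0$ in characteristic $2$), because $n\mid rj$ would force $n\mid j$.

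For part (1), with $b=L^{r}(t)$, I compute $S_{t+b}(\theta^{j})=S_{t}(\theta^{j})\bigl(1+\theta^{-rj}\bigr)$, so it suffices to show $S_{t}(\theta^{j})\neq 0$. When $\gcd(j,n)>1$, Lemma \ref{lm8}(2) gives $S_{t}(\theta^{j})=\frac{p+1}{2}$; here the hypothesis $p\equiv 1\pmod 4$ is exactly what makes $\frac{p+1}{2}$ odd, hence equal to $1\neq 0$ in $\mathbb{F}_{2}$ (and this is where the statement would fail for $p\equiv 3\pmod 4$). When $\gcd(j,n)=1$, I would use Lemma \ref{lm8}(3): since $p\equiv 1\pmod 8$ forces $q\equiv 3\pmod 8$ and $p\equiv 5\pmod 8$ forces $q\equiv 7\pmod 8$, exactly one of $G_{p,1}(\theta^{j}),G_{q,1}(\theta^{j})$ equals $w$ while the other lies in $\{0,1\}$, so $S_{t}(\theta^{j})=G_{q,1}(\theta^{j})+G_{p,1}(\theta^{j})+1\in\{w,w^{2}\}$ is nonzero.

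For part (2), with $b=L^{r}(\tau(t))$, I first evaluate $S_{\tau(t)}(\theta^{j})$ from the displayed formula for $S_{\tau(t)}(x)$. The terms $\frac{x^{n}-1}{x-1}$, $\frac{x^{n}-1}{x^{q}-1}$, $\frac{x^{n}-1}{x^{p}-1}$ vanish at $\theta^{j}$ unless the relevant exponent divides $j$, in which case they contribute $p$ or $q$ (both $\equiv 1\pmod 2$); a short divisibility bookkeeping gives $S_{\tau(t)}(\theta^{j})=S_{t}(\theta^{j})+1$ when $\gcd(j,n)=1$ and $S_{\tau(t)}(\theta^{j})=S_{t}(\theta^{j})$ when $\gcd(j,n)>1$. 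In the latter case $S_{t+b}(\theta^{j})=1+\theta^{-rj}\neq 0$ by the elementary fact above. In the coprime case, using $S_{t}(\theta^{j})\in\{w,w^{2}\}$ one has $S_{t}(\theta^{j})+1=S_{t}(\theta^{j})^{2}$, so $S_{t+b}(\theta^{j})=u+\theta^{-rj}u^{2}$ with $u=S_{t}(\theta^{j})$; its vanishing would force $\theta^{-rj}=u^{-1}=u^{2}$, i.e. an element of order $n$ equal to one of order dividing $3$.

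The main obstacle is precisely this final order argument together with the bookkeeping that feeds it: I must confirm that for $\gcd(j,n)=1$ the element $\theta^{-rj}$ genuinely has order $n$ (which uses $\gcd(rj,n)=1$) and that $n=pq>3$, so it cannot coincide with $u^{2}\in\{w,w^{2}\}$. Everything else is routine once the two regimes are separated and the parity and divisibility computations are carried out carefully; the only place the hypothesis $p\equiv 1\pmod 4$ is genuinely needed is to keep $\frac{p+1}{2}$ odd in the case $\gcd(j,n)>1$.
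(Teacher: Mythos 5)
Your proposal is correct and follows essentially the same route as the paper's own proof: reduce to $S_{t+b}(\theta^{j})\neq 0$ via Corollary~\ref{co3}, split on $\gcd(j,n)$ using Lemma~\ref{lm8}, use $p\equiv 1\pmod 4$ to get $\frac{p+1}{2}\equiv 1$ in the non-coprime case and $S_{t}(\theta^{j})\in\{w,w^{2}\}$ in the coprime case, and conclude with the order argument (an element of order $>3$ cannot equal $w$ or $w^{2}$). The only cosmetic difference is that you pin down explicitly when the correction term $\varepsilon$ equals $0$ or $1$ according to $\gcd(j,n)$, whereas the paper simply treats both values of $\varepsilon\in\mathbb{F}_{2}$ without deciding which occurs; both are sound.
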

\begin{proof}(1). From $S_{t+L^{r}(t)}(x)\equiv(1+x^{-r})S_{t}(x) \pmod {x^{n}-1}$ we get, for $1\leq j\leq n-1$, $S_{t+L^{r}(t)}(\theta^{j})=(1+\theta^{-rj})S_{t}(\theta^{j})$. From $\gcd(r,n)=1$ we know that $\theta^{-rj}\neq1$ and $1+\theta^{-rj}\neq0$. By assumption $p\equiv1 \pmod4$ and Lemma \ref{lm8}, we get $S_{t}(\theta^{j})=\frac{p+1}{2}\equiv1\in \mathbb{F}_{2}$ if $\gcd(j,n)>1$. Moreover, if $\gcd(j,n)=1$, then $S_{t}(\theta^{j})=G_{q,1}(\theta^{j})+G_{p,1}(\theta^{j})+1$. If $p\equiv1 \pmod8$, then $q=p+2\equiv3 \pmod8$. By Lemma \ref{lm8}, $G_{q,1}(\theta^{j})=w$, $G_{p,1}(\theta^{j})+1\in \mathbb{F}_{2}$. We get $S_{t}(\theta^{j})\neq0$. Similarly, if $p\equiv5 \pmod8$, then $q\equiv7 \pmod8$. We get $G_{p,1}(\theta^{j})=w$, $G_{q,1}(\theta^{j})+1\in \mathbb{F}_{2}$. We also have $S_{t}(\theta^{j})\neq0$. Therefore $(1+\theta^{-rj})S_{t}(\theta^{j})\neq0$ for all $1\leq j\leq n-1$, which means that $LC(w)=2n+2$ for $w=w(t,L^{r}(t))$ by Corollary \ref{co3}.

(2). From
$$
S_{t+L^{r}(\tau(t))}(x)=
S_{t}(x)+x^{-r}(S_{t}(x)+1+\frac{x^{n}-1}{x^{p}-1}+\frac{x^{n}-1}{x^{q}-1}+\frac{x^n-1}{x-1}),
$$
we know that for all $1\leq j\leq n-1$,
$$S_{t+L^{r}(\tau(t))}(\theta^{j})=S_{t}(\theta^{j})+\theta^{-rj}(S_{t}(\theta^{j})+\varepsilon),\ \varepsilon\in \mathbb{F}_{2}.$$
If $\varepsilon=0$, $S_{t+L^{r}(\tau(t))}(\theta^{j})=(1+\theta^{-rj})S_{t}(\theta^{j})$. By the proof of (1), we know that $S_{t+L^{r}(\tau(t))}(\theta^{j})\neq0$. If $\varepsilon=1$, $S_{t+L^{r}(\tau(t))}(\theta^{j})=S_{t}(\theta^{j})+\theta^{-rj}(S_{t}(\theta^{j})+1)$. If $S_{t}(\theta^{j})=0$ or 1, then $S_{t+L^{r}(\tau(t))}(\theta^{j})=\theta^{-rj}$ or 1. Otherwise, $S_{t}(\theta^{j})=w$ or $w^2$ by Lemma \ref{lm8} and $S_{t+L^{r}(t)}(\theta^{j})=w+\theta^{-rj}w^{2}$ or $w^{2}+\theta^{-rj}w$. From $1\leq j\leq n-1$ and $r\in \mathbb{Z}^{*}_{n}$ we know that $\theta^{-rj}\neq1$ so that the order of $\theta^{-rj}$ is at least $p>3$. If $S_{t+L^{r}(t)}(\theta^{j})=0$, then $\theta^{-rj}=w/w^{2}=w^{2}$ or $w^{2}/w=w$ and the order of $w^{2}$ or $w$ is 3. Therefore $LC(w)=2n+2$ for $w=(t,L^{r}(\tau(t)))$ by Corollary \ref{co3}.
\end{proof}
\section{Conclusion}
In this paper, we determine the linear complexity $LC(w)$ of the binary sequences $w=w(a,b)$ with period $4n$ interleaved by two binary sequences $a$ and $b$ with period $n\equiv3 \pmod4$ and having ideal autocorrelation. We present a general formula on $LC(w)$ and show that $LC(w)\leq 2n+2$ for all such interleaved sequence $w$. Then we present many series of $w=w(a, b)$ such that $LC(w)$ reaches the maximum values $2n+2$ by considering $a$ and $b$ being several classes of known binary sequences with ideal autocorrelation ($m$-sequences, Legendre, twin-prime and Hall sequences and their complement, shift and sample sequences). Many other series of binary sequences with ideal autocorrelation have been found (GMW, Kasami sequences and others). To determine the linear complexity of sequences $w(a, b)$ interleaved by new types $a$ and $b$ may need more technique.

As a final remark, we mention that for all binary sequences $w=w(a, b)$ interleaved by arbitrary binary sequences $a$ and $b$ with ideal autocorrelation, the 2-adic complexity of $w$ reaches the maximum value $\log_{2}(2^{4n}-1)$. This can be done easily by the method given in \cite{H. Xiong}.

\ifCLASSOPTIONcaptionsoff
  \newpage
\fi



%


\end{document}